\documentclass[journal]{IEEEtran}
\usepackage{amsmath,amsfonts}
\usepackage[thmmarks,amsmath]{ntheorem}
\usepackage{algorithmic}
\usepackage{algorithm}
\usepackage{array}
\usepackage[caption=false,font=normalsize,labelfont=sf,textfont=sf]{subfig}
\usepackage{textcomp}
\usepackage{stfloats}
\usepackage{url}
\usepackage{verbatim}
\usepackage{graphicx}
\usepackage{cite}
\usepackage{amssymb}
\usepackage{physics}
\usepackage{epstopdf}
\usepackage{cases}
\usepackage{indentfirst}
\usepackage{amsmath}
\usepackage{booktabs}
\usepackage{color}
\usepackage{balance}
\allowdisplaybreaks[4]

\newtheorem*{proof}{Proof:}
\theoremseparator{:}

\newtheorem{theorem}{Theorem}

\begin{document}
	
	\title{Joint Task Orchestration and Resource Optimization for $\mathbf{SC}^3$ Closed Loop in 6G Networks}
	
	\author{Xinran~Fang,
		Wei~Feng,~\IEEEmembership{Senior Member,~IEEE,} Yanmin Wang, Yunfei~Chen,~\IEEEmembership{Fellow,~IEEE,}\\ 
		Baoquan Ren, Ning~Ge, and Shi~Jin,~{\IEEEmembership{Fellow,~IEEE}}
		\thanks{Xinran~Fang, Wei~Feng, and Ning Ge are with the Department of Electronic Engineering, State Key Laboratory of Space Network and Communications, Tsinghua University, Beijing 100084, China (e-mail: {fxr20}@mails.tsinghua.edu.cn, {fengwei}@tsinghua.edu.cn, and  {gening}@tsinghua.edu.cn).
			
			Yanmin Wang is with the School of Information Engineering, Minzu University of China, Beijing 100081, China~(email: wangyanmin@muc.edu.cn).
			
			Yunfei Chen is with the Department of Engineering, University of Durham,
			DH1 3LE Durham, UK. (e-mail: yunfei.chen@durham.ac.uk).
			
			Baoquan Ren is with the China Electronic System Engineering Company, Beijing 100141, China~(email: renbq88@126.com).
			
			Shi Jin is with the National Mobile Communications Research Laboratory, Southeast University, Nanjing 210096, China (e-mail: jinshi@seu.edu.cn).
		}
	}

	\maketitle
	
\begin{abstract}
In hazardous environments, sensors and actuators can be deployed to \textit{see} and \textit{operate} on behalf of humans, enabling safe and efficient task execution. Functioning as a \textit{neural center}, the edge information hub (EIH), which integrates communication and computing capabilities, coordinates these sensors and actuators into sensing-communication-computing-control ($\mathbf{SC}^3$) closed loops to enable autonomous operations. From a system-level optimization perspective, this paper addresses the problem of joint sensor-actuator pairing and resource allocation across multiple $\mathbf{SC}^3$ closed loops. To tackle the resulting mixed-integer nonlinear programming problem, we develop a learning-optimization-integrated actor-critic (LOAC) framework. In this framework, a deep neural network-based actor generates pairing candidates, while an optimization-based critic subsequently allocates communication and computing resources. The actor is then iteratively refined through feedback from the critic. Simulation results demonstrate that the LOAC framework achieves near-optimal solutions with low computational complexity, offering significant performance gains in reducing control cost.
\end{abstract}

\begin{IEEEkeywords}
Autonomous operation, learning-optimization-integrated actor-critic framework, sensing-communication-computing-control ($\mathbf{SC}^3$) closed loop.
\end{IEEEkeywords}

\section{Introduction}
\subsection{Background and Motivation}
The demand for autonomous operations is growing rapidly. In disaster rescue, for instance, numerous rescue robots have been developed and deployed in post-disaster scenarios such as earthquakes, firefighting, and nuclear incidents. In April 2024, Boston Dynamics unveiled the second generation of their humanoid robot, {\textit{Atlas}}, capable of performing various fundamental yet potentially life-saving tasks, including lifting barriers, opening doors, and activating switches \cite{atlas}. However, a single robot is typically insufficient to tackle intricate tasks in dynamic environments. When multiple robots are dispatched, they require a nervous system-like communication network to coordinate with each other. To fully unlock the potential of robotic systems, sixth-generation (6G) networks aim to support collaborative robotic operations \cite{Lei}.

Taking disaster rescue as an example, as shown in Fig. \ref{fig1}, a 6G-enabled autonomous operation system typically operates through the following two phases:
\begin{itemize}
	\item \textit{Emergency Detection and Preparation:} \
	Upon detecting an emergency event such as a wildfire, preset sensors report to the nearest disaster management center. The center then activates all available sensors, including both pre-installed devices and those deployed in real time (e.g., via aerial delivery). These sensors collect data from affected areas and transmit it back to the center. Based on its assessment of disaster severity, the center devises a dispatch plan specifying the number of rescue robots to deploy and the configuration of the airborne edge information hub (EIH)~\cite{Lei2}, including its communication and computing modules.
	\item \textit{On-site Deployment and Rescue Operation:} \
	Once the airborne EIH and field robots arrive at the designated site, the EIH positions itself and establishes backhaul connections with the cloud center via satellite. It then collects sensor data, conducts analyses, and disseminates control commands to actuators. This setup creates sensing-communication-computing-control ($\mathbf{SC}^3$) closed loops, each autonomously executing a distinct task. The EIH serves simultaneously as the access point connecting distributed sensors and actuators, and as the computing unit that transforms raw sensing data into executable control commands.
\end{itemize}

An $\mathbf{SC}^3$
closed loop is analogous to the human reflex arc: it consists of a sensor, a computing center, an actuator, and the corresponding uplink (sensor-to-EIH) and downlink (EIH-to-actuator) (UL \& DL) communications. In each $\mathbf{SC}^3$
cycle, the sensor perceives the environment and target, the UL transmits the sensed data to the computing center, where the data is processed to generate control commands, which are then forwarded to the actuator via the DL for execution.
To establish an $\mathbf{SC}^3$
closed loop, the EIH must simultaneously connect one sensor and one actuator. In this context, sensor-actuator pairing emerges as a novel challenge absent from traditional communication networks, which typically establish links with individual users rather than jointly connecting two nodes to form a closed loop. Furthermore, sensor-actuator pairing and resource allocation are inherently coupled: sensor selection affects UL communication and computing requirements, while actuator selection influences DL communication. Conversely, available communication and computing resources determine which pairings are feasible or efficient. This interplay introduces both opportunities and challenges for designing 6G networks to support autonomous operations.

Therefore, this paper explores task-resource bidirectional adaptation regime and proposes a joint sensor-actuator pairing and resource allocation scheme. The joint optimization of discrete pairing decisions and continuous resource variables leads to a complex mixed-integer nonlinear programming (MINLP) problem. Efficiently solving this problem is critical for enabling real-time system reconfiguration in dynamic environments. In the following, we review related works to identify research gaps and motivate our proposed approach.

\begin{figure*} [t]
\centering
\includegraphics[width=0.8\linewidth]{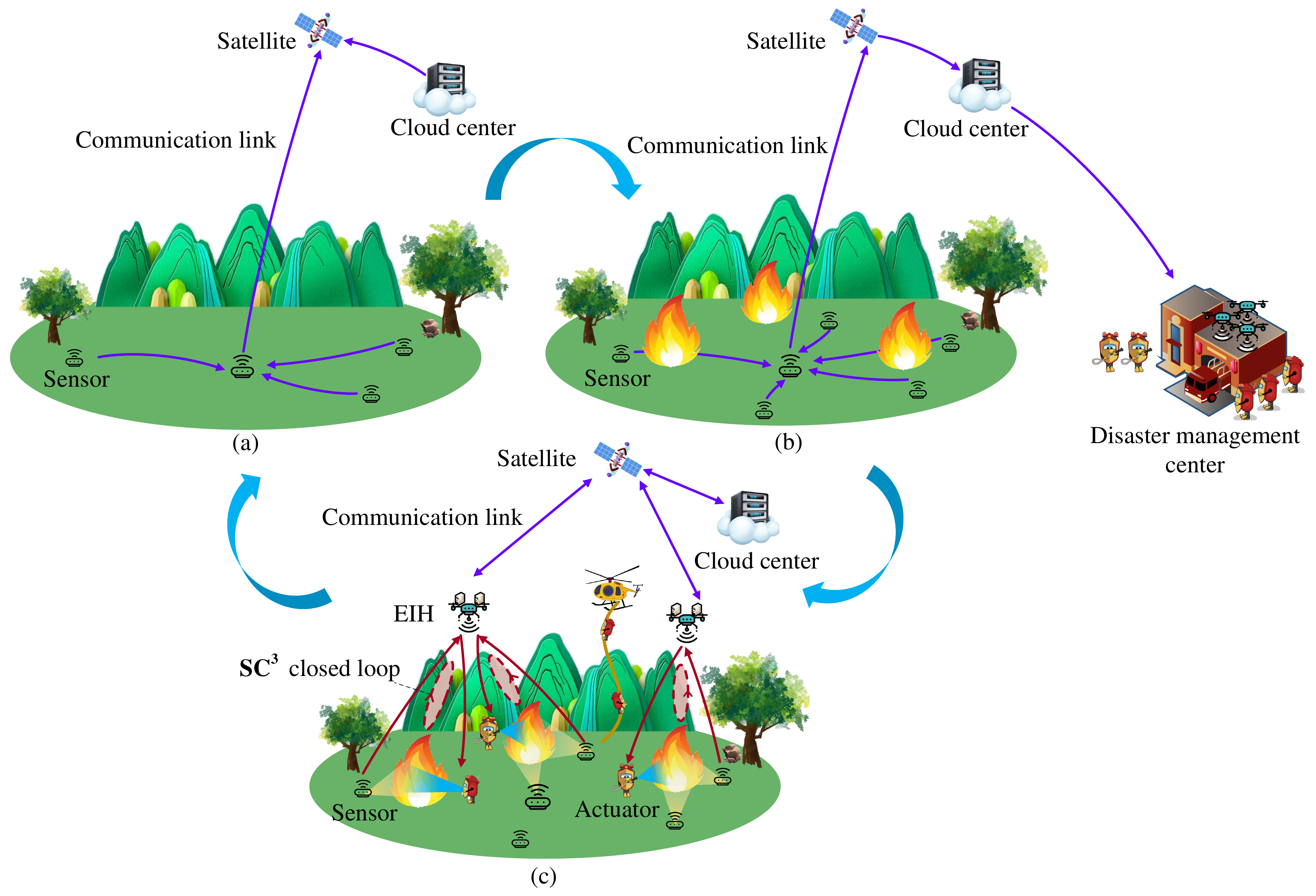}
\caption{Illustration of a 6G-enabled autonomous operation system for disaster rescue: (a) normal state, (b) emergency detection and preparation, and (c) on-site deployment and rescue operation.}
\label{fig1}
\end{figure*}

\subsection{Related Works}
\subsubsection{Closed-Loop Optimization}
In wireless control systems (WCSs) \cite{Park}, the interplay between wireless communications and control performance has been widely studied. Given that the inherent uncertainty of wireless communications significantly affects closed-loop control performance, fundamental communication requirements for control stabilization have been established. For linear time-invariant (LTI) systems, Tatikonda \emph{et al.} demonstrated that stabilizing a system requires a minimum communication rate equal to the sum of the logarithms of the unstable eigenvalues of the state matrix \cite{Tatikonda}. Building upon this result, Sahai \emph{et al.} introduced the concept of anytime capacity, which characterizes system stability under noisy observations, delayed actions, and the absence of feedback \cite{Sahai}. Kostina \emph{et al.} further derived a lower bound on the linear quadratic regulator (LQR) cost under rate constraints, quantifying the minimum information required to achieve a target LQR cost \cite{Kostina}. Li \emph{et al.} conducted closed-loop timing analysis by considering the randomness of computation, compression, and communication latency \cite{Li1}. Collectively, these results establish the theoretical foundation for WCS design.

To effectively utilize limited communication resources, control-aware communication schemes have been proposed that allocate wireless resources based on control objectives. To reduce LQR cost, Lei \emph{et al.} developed a control-oriented DL power allocation scheme and demonstrated its advantages over traditional water-filling methods \cite{Lei3}. Fang \emph{et al.} extended this design to time-sensitive scenarios by jointly optimizing transmit power and symbol blocklength \cite{Fang1}. Chang \emph{et al.} proposed a control-aware scheme that maximizes spectral efficiency subject to a required control convergence rate \cite{Chang}. Additionally, Ali \emph{et al.} employed deep reinforcement learning (DRL) to co-design communication and control policies, optimizing the control sampling period, blocklength, and packet error rate \cite{Ali}. 

More recently, research efforts have expanded from control closed loops to $\mathbf{SC}^3$
closed loops. For single-loop optimization, our prior work introduced a loop-level metric termed the closed-loop negentropy rate (CNER) and identified the task-level trade-off between UL and DL \cite{Fang2}. This study was subsequently extended to multi-loop settings, where a joint intra-loop and inter-loop resource allocation scheme was devised to minimize the sum LQR control cost \cite{Fang3}. Furthermore, Meng \emph{et al.} applied effective capacity theory to model the $\mathbf{SC}^3$
closed loop, proposing a joint control and resource allocation scheme that efficiently utilizes wireless bandwidth while ensuring control stability \cite{Meng}. Ying \emph{et al.} studied an automated guided vehicle system, jointly adjusting control cycles and communication schedules to meet control targets \cite{Ying}. Unlike these studies, which typically assume known system parameters, Han \emph{et al.} investigated a novel problem incorporating system identification (SI) into the design, proposing a scheme that minimizes communication energy while satisfying SI requirements \cite{Han}. You \emph{et al.} developed a sensing data collection strategy that minimizes latency under control performance constraints \cite{You}, proposing a water-sampling algorithm for UL bandwidth allocation as an extension to the classical water-filling method.

These studies provide valuable insights into closed-loop optimization. However, they all assume a predetermined loop configuration and focus solely on resource allocation. This limits system adaptability in dynamic environments, particularly when link conditions cannot support communication between connected nodes. The problem of jointly optimizing sensor-actuator pairing and resource allocation remains largely unexplored.

\subsubsection{Task-Oriented Communications}
Task-oriented communication has recently attracted significant attention in 6G network development, particularly for supporting vertical applications. Many studies have focused on learning-task-oriented communications, where communication strategies are optimized for learning metrics such as inference accuracy and convergence rate \cite{Shi}. For example, Shao \emph{et al.} applied information bottleneck theory to devise a task-oriented communication framework in which the transmitter extracts image features and transmits them to the receiver for inference \cite{Shao}. This approach maintains high inference accuracy while reducing communication overhead by transmitting only essential feature information. The authors subsequently extended this framework to video analysis tasks, leveraging temporal correlations between adjacent frames to alleviate communication burdens \cite{Shao2}. Building on similar principles, Diao \emph{et al.} investigated an autonomous driving scenario and proposed a task-oriented joint source-channel coding strategy coupled with trajectory optimization to reduce energy consumption \cite{Diao}.

Moving beyond intermediate learning metrics, recent works have sought to optimize ultimate task performance directly. For instance, Tung \emph{et al.} employed DRL to jointly optimize robot actions alongside communication strategies (i.e., channel coding and modulation) for guided robotic tasks, demonstrating superior performance over traditional schemes that separately design communication and control policies \cite{Tung}. Similarly, Mostaani \emph{et al.} considered a multi-agent system and used DRL to simultaneously learn the value of information, each agent's quantization policy, and the controller's policy \cite{Mostaani}. Focusing on communication among embodied agents, Zhang \emph{et al.} provided a comprehensive review of recent task-oriented communication approaches, summarizing major challenges and solution strategies for three-stage task-oriented design \cite{Zhang1}.

These studies represent early attempts to embed task requirements into communication design. However, they focus primarily on individual communication links rather than $\mathbf{SC}^3$
closed loops. Task-oriented closed-loop optimization remains largely unexplored.

\subsubsection{Mixed-Integer Nonlinear Programming}
In various wireless communication scenarios, optimization problems involving user association or pairing are typically formulated as MINLP problems. The inherent complexity of these problems stems from the deep coupling among UL and DL communications, edge computing, and physical control. To address such problems, diverse modeling and solution frameworks have been proposed. For instance, to ensure hyper-reliable low-latency communication in wireless control systems, Li \emph{et al.} optimized continuous resources via alternating optimization, while addressing integer scheduling variables through exhaustive search or flow-shop algorithms \cite{Li2026Joint}. In mobile edge computing, Feng \emph{et al.} decoupled the MINLP problem into continuous and integer subproblems, solving them via dual decomposition and matching-theoretic algorithms, respectively \cite{Feng2021Joint}. Sun \emph{et al.} jointly optimized UL and DL communication and computing, employing exhaustive search to obtain optimal sensor-actuator pairing \cite{Sun2022Joint}. In \cite{Tran}, the authors adopted a decoupled methodology, utilizing a heuristic algorithm to determine discrete offloading strategies, followed by convex and quasi-convex optimization techniques for continuous resource allocation. Gu \emph{et al.} proposed a communication-computation-aware association mechanism by relaxing binary variables into continuous probabilities within a meta-analytical framework \cite{Gu2023Communication}.

Ding \emph{et al.} developed a DRL approach to jointly optimize offloading strategies, transmit power, and CPU frequencies to maximize computing efficiency \cite{Ding}. Focusing on wireless-powered mobile edge computing networks, Huang \emph{et al.} \cite{Huang} proposed a novel DRL-based online offloading framework. The authors employed a deep neural network to learn offloading decisions, coupled with an optimization-based module for continuous resource allocation. Li \emph{et al.} subsequently applied the same learning-optimization integrated approach to solve combinatorial computation offloading problems \cite{Li}. For full-duplex networks, Sekander \emph{et al.} addressed decoupled UL and DL association by transforming the problem into a geometric programming framework supplemented by a distributed matching game \cite{Sekander2017Decoupled}. Zhang \emph{et al.} maximized max-min fairness using continuous relaxation and greedy rounding \cite{Zhang2019MaxMin}.

In summary, exact algorithms such as exhaustive search \cite{Sun2022Joint, Li2026Joint} provide global optimality but suffer from high computational complexity, rendering them unsuitable for real-time autonomous operations. For improved efficiency, decoupled optimization schemes are widely adopted, typically combining heuristics or matching games for discrete variables with mathematical optimization for continuous variables \cite{Tran}. However, these approaches fail to fully exploit the intrinsic coupling between integer decisions and continuous resource allocation, often leading to performance degradation. To address this limitation, joint optimization frameworks have emerged, broadly classified into model-based and model-free approaches. Model-based methods typically employ continuous relaxation techniques, where integer variables are relaxed into continuous space. Since the resulting relaxed problems are generally non-convex, techniques such as successive convex approximation (SCA) and geometric programming are employed \cite{Sekander2017Decoupled, Zhang2019MaxMin, survey3}. However, these methods often require substantial iteration overhead and incur non-trivial performance loss during binary recovery. Alternatively, model-free methods, particularly those based on DRL, map network states directly to decisions. Such approaches can achieve near-optimal solutions once a well-designed neural network is trained \cite{Ding}. Nevertheless, applying DRL directly to MINLP problems requires discretization of continuous variables, leading to factorial explosion in the action space. Consequently, learning-optimization integrated frameworks have emerged as a highly effective alternative \cite{Li, Huang}. By employing neural networks for discrete decisions and leveraging mathematical optimization for continuous resource allocation, these frameworks successfully bridge the gap between computational speed and solution optimality for real-time applications.

\subsection{Main Contributions and Organization}
In this paper, we consider a 6G-enabled autonomous operation system deployed in post-disaster areas. To maximize system utility for closed-loop control, we propose a system-level closed-loop optimization scheme and develop a learning-optimization-integrated actor-critic (LOAC) framework. The main contributions are summarized as follows:
\begin{enumerate}
	\item We formulate an autonomous operation system comprising $\mathbf{SC}^3$
	closed loops. To maximize system utility in task execution, we jointly optimize sensor-actuator pairing and communication/computing resource allocation. This joint design enables dynamic task-resource bidirectional adaptation to changing conditions.
	\item We model the task-execution process as a closed-loop control process and introduce a loop-level metric termed CNER. Building upon this, the sum LQR cost is minimized by jointly optimizing pairing decisions, UL and DL bandwidth and power, as well as CPU frequency.
	\item To solve the resulting MINLP problem, we develop the LOAC framework. The deep neural network (DNN)-based actor generates sensor-actuator pairing solutions, while the optimization-based critic allocates continuous resources given the pairing decisions. Feedback from the critic is utilized to iteratively update the DNN, enabling the LOAC framework to provide near-optimal solutions with low computational complexity.
\end{enumerate}

The remainder of this paper is organized as follows. Section~\ref{section 2} introduces the autonomous operation system and the $\mathbf{SC}^3$ closed-loop model. Section~\ref{section 3} presents the joint sensor-actuator pairing and resource allocation scheme along with the LOAC framework. Section \ref{section 4} presents simulation results and discussion. Section~\ref{section 5} concludes the paper.

\textbf{Notation}: We use bold lowercase letters, bold uppercase letters, and calligraphic uppercase letters to denote vectors, matrices, and sets, respectively. $\mathbf{0}_n$ denotes the $n \times n$ zero matrix, $\mathbf{I}_n$ denotes the $n \times n$ identity matrix, and $\mathbb{R}^{n \times n}$ denotes the set of $n\times n$ real matrices. Furthermore, $\det(\mathbf{A})$ represents the determinant of matrix $\mathbf{A}$, $|a|$ denotes the absolute value of scalar $a$, and $|\mathcal{A}|$ denotes the cardinality of set $\mathcal{A}$. The complex Gaussian distribution with zero mean and variance $\sigma^2$ is denoted as $\mathcal{CN}(0,\sigma^2)$.

\section{Autonomous Operation and $\mathbf{SC}^3$ Closed-Loop Model}
\label{section 2}
\begin{figure} [t]
\centering
\includegraphics[width=0.9\linewidth]{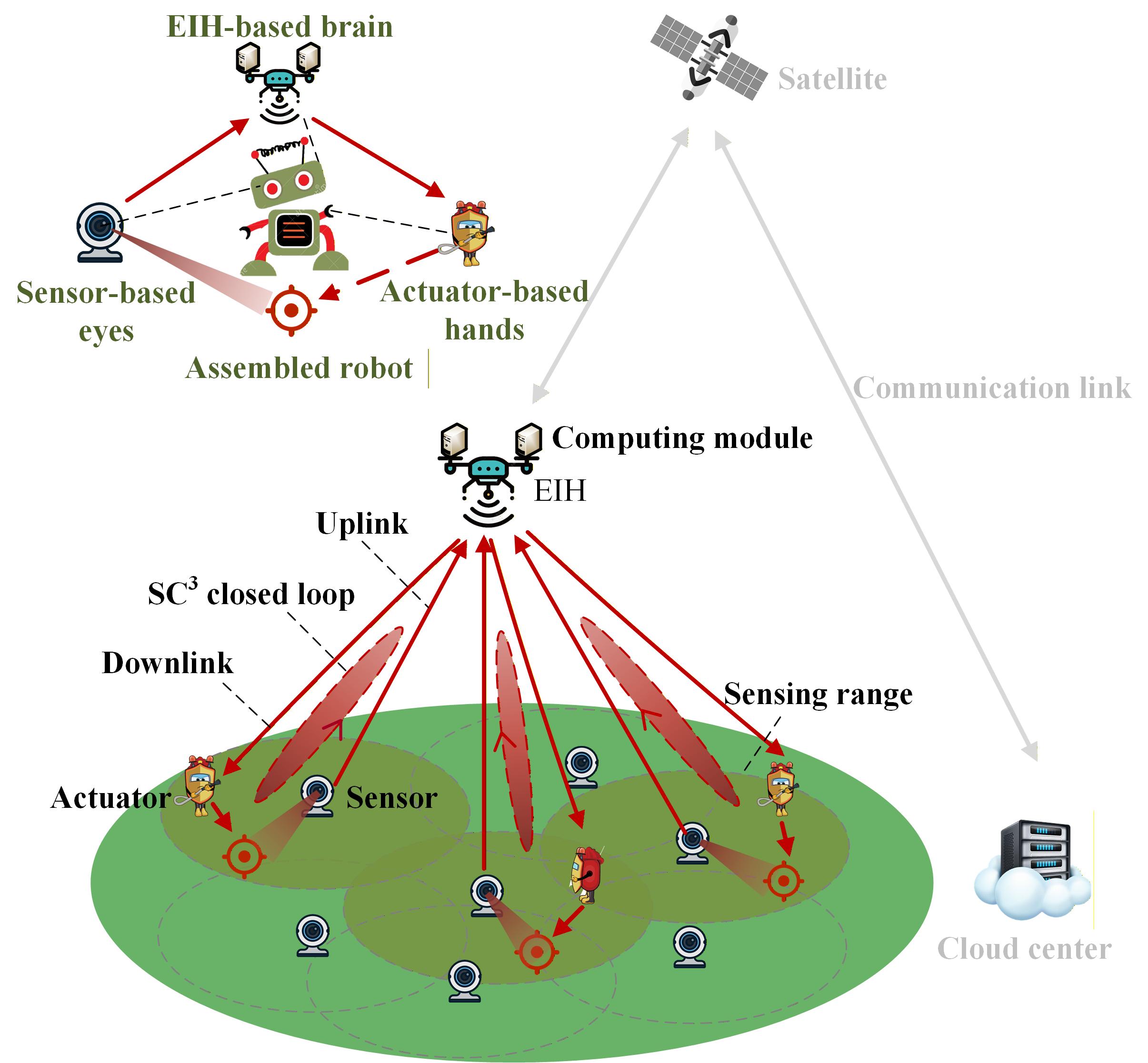}
\caption{{Illustration of a 6G autonomous operation system, where the EIH orchestrates distributed sensors and actuators into integrated $\mathbf{SC}^3$ closed loops. As depicted in the upper left, the $\mathbf{SC}^3$ closed loop is conceptually analogous to an \emph{assembled robot}: operating as a unified entity, it gathers environmental data through its sensor-based \emph{eyes}, processes information within its EIH-based \emph{brain}, and executes precise control commands through its actuator-based \emph{hands}.}}
\label{model}
\end{figure}

As illustrated in Fig. \ref{model}, this study considers a disaster rescue scenario where hazardous environments preclude human intervention. In such contexts, a 6G-enabled autonomous operation system offers a promising solution. Given the potential absence or failure of terrestrial infrastructure, an airborne EIH is rapidly deployed to the site and pairs distributed sensors and actuators into $\mathbf{SC}^3$ closed loops, conceptually analogous to assembled robots. Within each loop, the EIH functions as the computational \emph{brain}, while sensors and actuators serve as the perceptual \emph{eyes} and operational \emph{hands}, respectively. Operating as an integrated unit, each robot gathers environmental data through its sensor-based \emph{eyes}, processes information within its EIH-based \emph{brain}, and executes control commands through its actuator-based \emph{hands}.
For architectural completeness, the system diagram includes the satellite and cloud center, which provide global support such as tracking, telemetry, and high-level control guidance for the EIH. However, the technical contribution of this paper focuses on modeling and optimizing the edge autonomous system. To visually delineate this scope in Fig. \ref{model}, the satellite and cloud center are rendered in gray.

The whole control task is decomposed into $K$ independent subtasks, which are subsequently assigned to $K$ $\mathbf{SC}^3$ closed loops. Taking the $k$-th subtask as an example, the controlled system is modeled as an LTI system,
\begin{equation}
\label{sta_elv}
\mathbf{x}_{k,i+1} = \mathbf{A}_k\mathbf{x}_{k,i}+\mathbf{B}_k\mathbf{u}_{k,i}+\mathbf{v}_{k,i},
\end{equation}
where $\mathbf{x}_{k,i}\in\mathbb{R}^{n_k\times1}$ denotes the system state at time index $i$ ($n_k$ is the dimension of the system state), $\mathbf{u}_{k,i}\in\mathbb{R}^{m_k\times1}$ denotes the control input ($m_k$ is the dimension of the control input), and $\mathbf{v}_{k,i}\in\mathbb{R}^{n_k\times1}$ denotes the system noise with covariance matrix $\mathbf{\Sigma}_{\mathbf{v}_k}$. In addition, $\mathbf{A}_k\in\mathbb{R}^{n_k\times n_k}$ is the state matrix, and $\mathbf{B}_k\in\mathbb{R}^{n_k\times m_k}$ is the input matrix. The intrinsic entropy, i.e., $\log|\det\mathbf{A}_k|$, measures the system instability \cite{Tatikonda}. Systems with a higher intrinsic entropy are inherently more challenging to control. We use LQR cost to evaluate the control performance of these $\mathbf{SC}^3$ closed loops, which measures the cumulative cost of state deviations and control inputs over time:
\begin{equation}
l_k=\limsup\limits_{N\rightarrow \infty}\mathbb{E} \left[ \frac{1}{N}\sum_{i=1}^{N} \left(\mathbf{x}_{k,i}^\text{T}\mathbf{Q}_k\mathbf{x}_{k,i} +\mathbf{u}_{k,i}^\text{T}\mathbf{R}_k\mathbf{u}_{k,i}\right) \right],
\end{equation}
where $l_k$ denotes the LQR cost, $\mathbf{Q}_k\in\mathbb{R}^{n_k\times n_k}$ and $\mathbf{R}_k\in\mathbb{R}^{m_k\times m_k}$ are positive semi-definite weighting matrices. A smaller $l_k$ indicates better control performance with lower input cost.

We assume that there are $S$ sensors and $K$ actuators. Each actuator is positioned within its designated control region, while sensors are distributed more densely across the disaster area. Given that $S>K$, the EIH must select $K$ sensors from the total $S$ and pair them with actuators to establish $K$ $\mathbf{SC}^3$ closed loops. For each sensor $s$, we consider the following attributes that influence selection:
\begin{equation}
\{\mathbf{q}_s, r_s, h^u_s, p^{u}_{s,\max}, D^s_s, \gamma_s, \rho_s\}, \label{1}
\end{equation}
where $\mathbf{q}_s\in\mathbb{R}^{2\times1}$ denotes the sensor's location, $r_s$ denotes the sensing range, $h^u_s$ denotes the UL channel gain, $p^{u}_{s,\max}$ denotes the maximum transmit power, and $D^s_s$ (bits/$\mathbf{SC}^3$ cycle) denotes the sensing data volume per $\mathbf{SC}^3$ cycle, termed the sensing cycle rate. {Furthermore, $\gamma_s$ (CPU cycles/bit) characterizes the computational intensity of sensing data, representing the CPU cycles required per bit of raw sensing input \cite{Sun2022Joint}. Meanwhile, $\rho_s \in [0,1]$ denotes the information extraction ratio \cite{Fang3}, modeling the transformation of high-dimensional raw sensing data into concise actionable control commands. For instance, in visual-based firefighting control, raw image frames are processed at the EIH to extract specific commands such as steering angles or velocity vectors for actuators. In practice, $\gamma_s$ and $\rho_s$ can be instantiated for specific tasks based on sensing data types and computing models. Refining internal computing models to explicitly instantiate these parameters for specialized unmanned tasks remains an important direction for future research.}

For the actuator $k$, we characterize its attributes by
\begin{equation}
\{\mathbf{q}_k, h_k^d, \mathbf{A}_k,\mathbf{B}_k,\mathbf{Q}_k,\mathbf{R}_k,\mathbf{\Sigma}_{\mathbf{v}_k}\}, \label{2}
\end{equation}
which represent the location, DL channel gain, and the parameters of controlled system. 
In the following, we introduce the $\mathbf{SC}^3$ closed loop model in detail, and show how the sensor and actuator traits impact the closed-loop performance.

\begin{figure*} [t]
\centering
\includegraphics[width=1\linewidth]{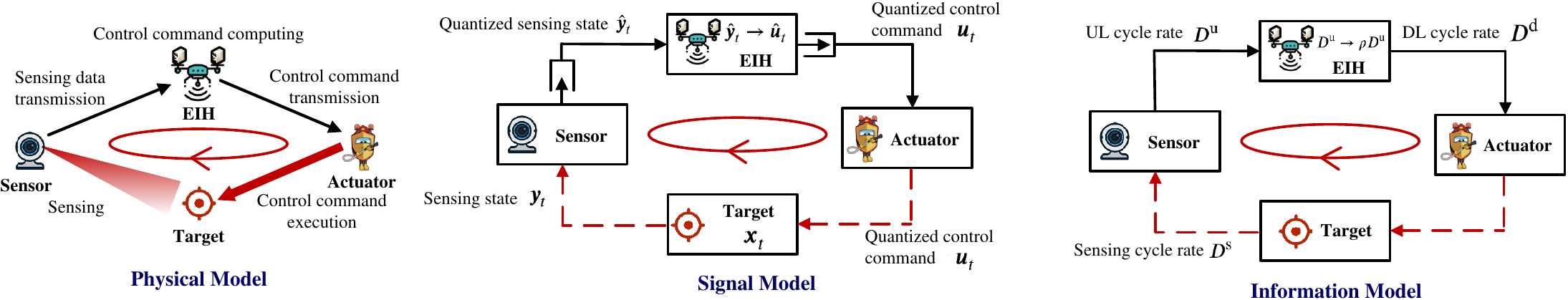}
\caption{Illustration of the $\mathbf{SC}^3$ closed-loop model from three levels: physical level, signal level, and information level.}
\label{diagram}
\end{figure*}

Firstly, the sensor $s$ can be paired with actuator $k$ only when the actuator is located within its sensing range.
Accordingly, we denote the effective sensor set of actuator $k$ as $\mathcal{S}_k$, which contains all the sensor indexes that satisfy the sensing range constraint as
\begin{equation}
\label{4}
\left\{
\begin{aligned}
	&s\in\mathcal{S}_k    &\|\mathbf{q}_s-\mathbf{q}_k\|\leqslant r_s, \\
	&s\notin\mathcal{S}_k  &  \text{otherwise}. \\
\end{aligned}
\right.
\end{equation}
For the communication process, we use composite channel models for both UL\&DL transmissions. All channels are assumed to be independent and identically distributed (i.i.d.). The UL channel gain of sensor $s$ and the DL channel gain of actuator $k$ are calculated by
\begin{equation}
\label{5}
\begin{aligned}
	&	h^u_s=\alpha^u_s(\beta^u_s)^{\frac{1}{2}}, \\
	&	h^d_k=\alpha^d_k(\beta^d_k)^{\frac{1}{2}}.
\end{aligned}
\end{equation}
For simplicity, we omit the subscript and use $\alpha$ and $\beta$ to introduce their expressions. Specifically, $\alpha \sim \mathcal{CN}(0,1)$ denotes the small-scale fading, and $\beta$ denotes the large-scale fading, which is calculated by \cite{Chen}
\begin{equation}
\label{6}
\begin{aligned}
	\beta[\text{dB}]&=\frac{A}{1+ae^{-b(\rho-a)}}+B,\\
	\beta&=10^{-\frac{\beta [\text{dB}]}{10}},
\end{aligned}
\end{equation}
with
\begin{subequations}
\begin{align}
	&A=\eta_{LOS}-\eta_{NLOS}, \\
	&B=20\log_{10}(d)+20\log_{10}\big(\frac{4\pi f_c}{c}\big)+\eta_{NLOS}, \\
	&\rho=\frac{180}{\pi}\arcsin\big(\frac{h}{d}\big),
\end{align}
\end{subequations}
where $d$ (m) denotes the distance between the EIH and the sensor or the actuator,  $h$ (m) denotes the height of the EIH, $c$ (m/s) denotes the speed of light, $f_c$ (Hz) denotes the carrier frequency, and $\eta_{LOS}$, $\eta_{NLOS}$, $a$, and $b$ are environment-dependent constants.
Let $t^u_k$ and $t^d_k$ denote the time allocated for UL and DL transmissions, respectively. {The volumes of UL\&DL transmitted data in one $\mathbf{SC}^3$ cycle, denoted as $D^u_k\ (\text{bits/$\mathbf{SC}^3$ cycle})$ and $D^d_k\ (\text{bits/$\mathbf{SC}^3$ cycle})$, can be calculated as
\begin{equation}
\label{8}
\begin{aligned}
	&D^u_k\leqslant \sum_{s=1}^Sa_{sk}t^u_kB_k\log_2(1+\frac{p^u_s|h^u_s|^2}{N_0B_k}), \\
	&D^d_k\leqslant t^d_kB_k\log_2(1+\frac{p^d_k|h^d_k|^2}{N_0B_k}),
\end{aligned}
\end{equation}
}
where $D^u_k$ and $D^d_k$ are referred to as UL\&DL cycle rates, $N_0$ denotes the noise power spectrum density, $p^u_s$ denotes the transmit power of sensor $s$, $p^d_k$ denotes the allocated DL transmit power to actuator $k$, and $B_{k}$ is the allocated bandwidth, which is shared between the UL\&DL as the two links operate in a time-division manner. The binary variable $a_{sk}$ indicates whether sensor $s$ is paired with actuator $k$
\begin{equation}
\left\{
\begin{aligned}
	&a_{sk}=1,\quad \text{sensor $s$ is paired with actuator $k$},\\
	&a_{sk}=0, \quad \text{sensor $s$ is not paired with actuator $k$}.
\end{aligned}
\right.
\end{equation}

To ensure a one-to-one matching between sensors and actuators within each $\mathbf{SC}^3$ closed loop, $a_{sk}$ satisfies
\begin{equation}
\begin{aligned}
	&\sum\limits_{k=1}^K a_{sk}\leqslant1,\ s=1,\dots,S, \\
	&\sum\limits_{s=1}^Sa_{sk}=1, \ k=1,\dots,K. \\
\end{aligned}
\end{equation}

In the EIH, raw data from sensors undergoes processing prior to control decision-making. The data volume received at the EIH is constrained by both the sensing cycle rate and the UL cycle rate, i.e., $\min\{ \sum\limits_{s=1}^Sa_{sk}D^s_s, D^u_k\}, \forall k$. To focus on communication and computing resource allocation, we assume that the sensing cycle rates of all sensors exceed their respective UL cycle rates. Under this assumption, the data volume received at the EIH equals the UL cycle rate, $D^u_k$, and the effective information output is $\sum\limits_{s=1}^S a_{sk}\rho_s D^u_k$. This process can be formulated as
\begin{equation}
\label{11}
D^u_k\rightarrow \sum_{s=1}^Sa_{sk}\rho_s D^u_k.
\end{equation}
Furthermore, computation must be completed within a strict latency budget. Let $f_k$ denote the CPU frequency allocated to loop $k$. The computing time is constrained by a specified limit $t^c_k$:
\begin{equation}
\label{12}
\frac{\sum\limits_{s=1}^S a_{sk}\gamma_s D^u_k}{f_k}\leqslant t^c_k,
\end{equation}
where $\gamma_s$ (CPU cycles/bit) represents the computational complexity for processing data from sensor $s$.

Then, the information extracted during computing is transmitted to the actuator via the DL. The data received by the actuator is denoted as $D^{\mathbf{SC}^3}_k$, which is given by
\begin{equation}
\label{eq:cner_definition}
D^{\mathbf{SC}^3}_k = \min\left\{\sum_{s=1}^S a_{sk}\rho_s D^u_k, \ D^d_k\right\}.
\end{equation}
We denote $D^{\mathbf{SC}^3}_k$ as the CNER. As its name implies, CNER quantifies the entropy reduction of the controlled system achieved through one $\mathbf{SC}^3$ cycle. As a loop-level metric, CNER unifies the performance of UL, computing, and DL functions, reflecting the efficiency of the entire $\mathbf{SC}^3$ closed loop rather than any single component. {Within the CNER formulation, $\rho_s$ 
quantifies the effective information contained in the transmitted UL data. The resulting asymmetric demands of UL\&DL transmissions underscore the necessity of a joint closed-loop design to optimally balance UL\&DL, and computing capabilities.}

Furthermore, the modeling of the $\mathbf{SC}^3$ closed loop can be classified into three hierarchical levels: physical, signal, and information. As illustrated in Fig. \ref{diagram}, the physical level corresponds to the real world and provides the most accurate representation of physical processes; however, design schemes at this level are often computationally prohibitive. The signal level focuses on specific system states and control commands exchanged within each $\mathbf{SC}^3$ cycle. While suitable for engineering design of quantization, coding, and control algorithms, this level lacks a unified performance measure across sensing, communication, computing, and control functions. The CNER we introduce is modeled at the information level, which appropriately abstracts these heterogeneous functions and integrates them from an information-theoretic perspective. This abstraction enables unified quantitative characterization of closed-loop performance.

According to \cite{Kostina}, the LQR cost is lower bounded by a function of CNER
\begin{equation}
\label{13}
l_k \geqslant \frac{n_k N \!\left( \mathbf{v}_k\right)|\det \mathbf{M}_k|^\frac{1}{n_k}} {2^{\frac{2}{n_k}(D^{\mathbf{SC}^3}_k-\log_2|\det \mathbf{A}_k|)}-1}+\mathrm{Tr}\left( \mathbf{\Sigma}_{\mathbf{v}_k}\mathbf{S}_k\right),
\end{equation}
where $N(\mathbf{v}_k)$ denotes the differential entropy of $\mathbf{v}_k$,
\begin{equation}
N(\mathbf{v}_k)=\frac{1}{2\pi e}e^{-\frac{2}{n_k}\int_{\mathbb{R}^{n_k}}g({\mathbf{v}_k})\log ( g({\mathbf{v}_k}))\mathrm{d}{\mathbf{v}_k}}, 
\end{equation}
with $g({\mathbf{v}_k})$ representing the probability density function of $\mathbf{v}_k$. The matrices $\mathbf{M}_k$ and $\mathbf{S}_k$ are obtained from the following discrete-time algebraic Riccati equations,
\begin{equation}
\label{17}
\begin{aligned}
	\mathbf{S}_k & = \mathbf{Q}_k + \mathbf{A}_k^{\mathrm{T}}\left(\mathbf{S}_k- \mathbf{M}_k\right) \mathbf{A}_k,\\
	\mathbf{M}_k & = \mathbf{S}_k \mathbf{B}_k \left( \mathbf{R}_k + \mathbf{B}_k^{\mathrm{T}} \mathbf{S}_k \mathbf{B}_k\right)^{-1} \mathbf{B}_k^{\mathrm{T}}\mathbf{S}_k.
\end{aligned}
\end{equation}

In this paper, we adopt \eqref{13} as the optimal LQR cost subject to the CNER constraint. It is important to clarify that while the theoretical lower bound in \eqref{13} is derived based on a long-term average information rate, our defined CNER evaluates the instantaneous rate of a single $\mathbf{SC}^3$
cycle. To bridge this timescale gap, we assume a quasi-static communication environment, given that air-to-ground channels experience limited variations. Under this assumption, the CNER remains approximately constant over hundreds of control cycles and can thus be used to estimate LQR cost performance. Furthermore, in the presence of large-scale environmental variations, our proposed scheme dynamically linearizes the system at the new operating point, triggering a fresh round of joint resource reallocation to adapt to updated channel conditions and system dynamics.

On this basis, we discuss the tightness of lower bound \eqref{13}. Fig.~\ref{fig:bound_tightness} presents a direct comparison between the achievable LQR cost and the lower bound. The true performance curve was computed numerically using the semidefinite programming (SDP) method from \cite{Tanaka}, which yields the minimum achievable cost for a given CNER. As illustrated, the gap between achievable performance and the theoretical bound is minimal across different CNER values. This provides strong numerical evidence that the bound is sufficiently tight, justifying our approach of optimizing it as an effective proxy for the true LQR cost.

In addition, the LQR-cost bound \eqref{13} is meaningful only when the system is in the stable region, which requires
\begin{equation}
\label{18}
D^{\mathbf{SC}^3}_k>\log_2|\det \mathbf{A}_k|.
\end{equation}
As shown in \eqref{8}, \eqref{11}, and \eqref{12}, sensor selection influences UL transmission quality and determines computing requirements, while actuator selection affects DL transmission quality and determines the controlled system dynamics. Sensor-actuator pairing is inherently coupled with communication and computing cycle rates. Collectively, these factors jointly determine the CNER and the LQR control cost.
\begin{figure} [t]
\centering
\includegraphics[width=0.9\linewidth]{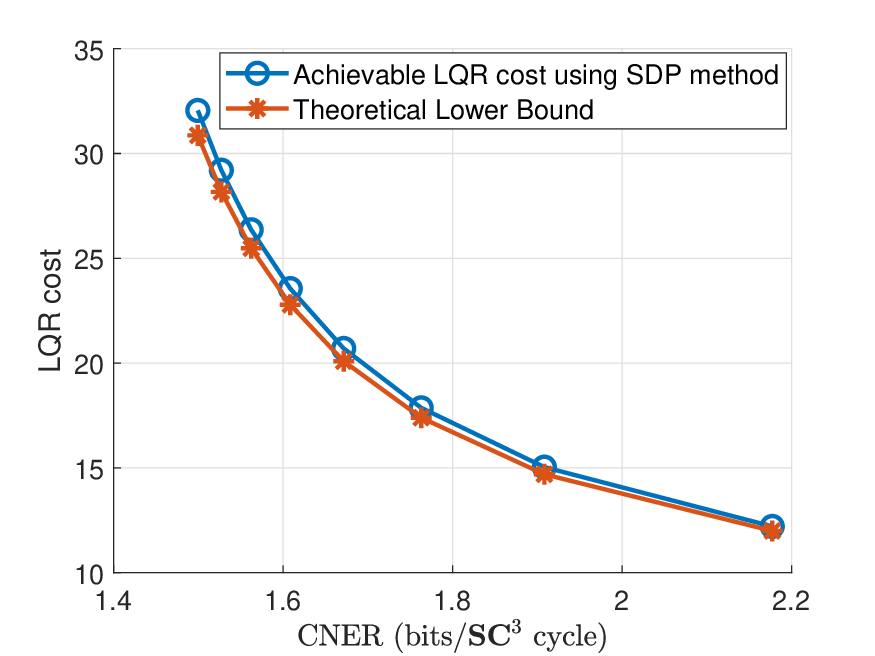}
\caption{Comparison between the achievable LQR cost, computed via the SDP method, and the theoretical lower bound. The parameters are set as follows: $\mathbf{A}=\mathrm{diag}(2, 1.2, 0.3)$, $\mathbf{B}=\mathbf{I}_{3}$, $\mathbf{Q}=\mathbf{I}_{3}$, $\mathbf{R}=\mathbf{I}_{3}$, and covariance $\Sigma_{\mathbf{v}}=\mathbf{I}_{3}$.}
\label{fig:bound_tightness}
\end{figure}

\section{Joint Sensor-Actuator Pairing and Resource Allocation}
\label{section 3}
\subsection{Problem Formulation}
In this section, we introduce the system-level closed-loop optimization problem, which optimizes the pairing decision, UL\&DL bandwidth and transmit power, and the computing CPU frequency to minimize the sum LQR cost of $K$ $\mathbf{SC}^3$ closed loops. The optimization problem is formulated as
\begin{subequations}
\begin{align}
	\mbox{(P1)} \ \ &\min\limits_{\mathcal{A}, \mathcal{P}^u, \mathcal{B}, \mathcal{F}, \mathcal{P}^d} \sum\limits_{k=1}^Kl_k \\
	\text{s.t.} \
	&l_k \geqslant \frac{n_k N \!\left( \mathbf{v}_k\right)|\det \mathbf{M}_k|^\frac{1}{n_k}} {2^{\frac{2}{n_k}(D^{\mathbf{SC}^3}_k-\log_2|\det \mathbf{A}_k|)}-1}+\mathrm{Tr}\left( \mathbf{\Sigma}_{\mathbf{v}_k}\mathbf{S}_k\right), \forall k \label{16b} \\
	&D^{\mathbf{SC}^3}_k> \log_2|\det \mathbf{A}_k|,\forall k \label{a16} \\
	&{D^{\mathbf{SC}^3}_k\leqslant\min\{\sum_{s=1}^Sa_{sk}\rho_s D^u_k, D^d_k\}, \forall k} \label{16c}\\
	&D^u_k\leqslant \sum\limits_{s=1}^Sa_{sk}t^u_kB_k\log_2(1+\frac{|h_s^u|^2p_s^u}{B_kN_0}), \forall k \label{16d} \ \\
	&D^d_k\leqslant t^d_kB_k \log_2(1+\frac{|h_k^d|^2p_k^d}{B_kN_0}), \forall k \label{16e}\\
	&\frac{\sum\limits_{s=1}^Sa_{sk}\gamma_{s}D^u_k}{f_k}\leqslant t^c_k, \forall k \label{16f}\\
	&p_s^u\leqslant p_{s,\max}^u,\forall s \label{16i} \\
	&\sum\limits_{k=1}^K p_k^d\leqslant p_{\max}^d \label{16i-1} \\
	&\sum\limits_{k=1}^K B_k\leqslant B_{\max} \label{16g}\\
	&\sum\limits_{k=1}^Kf_k\leqslant f_{\max} \label{16h}\\
	&B_k\geqslant 0,\ \forall k \\
	&f_k\geqslant 0,\ \forall k \\
	&p_k^d\geqslant 0,\ \forall k \label{16k-1}\\ 
	&p_s^u\geqslant 0,\ \forall s \label{16k}\\
	&\sum\limits_{k=1}^{K}a_{sk}\leqslant1, \forall s \label{16j}\\
	&\sum\limits_{s=1}^Sa_{sk}=1,\forall k \label{16l}\\
	&a_{sk}=0 \ \text{if} \ s\notin \mathcal{S}_k,\ \forall k \label{16t} \\
	&a_{sk} \in \{0, 1\}, \quad \forall s, k \label{16o}
\end{align}
\end{subequations}
{where 
$\mathcal{A} = \{a_{sk}\}_{\forall s,k}$ represents the pairing matrix, while 
$\mathcal{P}^u = \{p^u_s\}_{\forall s}$, $\mathcal{B} = \{B_k\}_{\forall k}$, $\mathcal{F} = \{f_k\}_{\forall k}$, and $\mathcal{P}^d = \{p^d_k\}_{\forall k}$ denote the sets for UL power, bandwidth, computing frequency, and DL power, respectively. $B_{\max}$, $p_{s,\max}^u$, $p_{\max}^d$, and $f_{\max}$ denote the maximum available bandwidth, maximum UL transmit power of sensor $s$, maximum DL transmit power, and maximum CPU frequency, respectively. 
Within the formulated optimization framework, constraint \eqref{16b} characterizes the theoretical lower bound of the LQR control cost, while \eqref{a16} specifies the stability requirement for the closed-loop system. Furthermore, \eqref{16c} ensures that the CNER is bounded by the respective UL and DL cycle rates, effectively capturing the coupling between UL and DL. Constraints \eqref{16d} and \eqref{16e} represent the transmission rate limits for UL and DL, respectively, while \eqref{16f} accounts for the computing latency budget at the EIH. Additionally, \eqref{16i}--\eqref{16k} impose resource constraints, and \eqref{16j}--\eqref{16o} enforce one-to-one sensor-actuator pairing and sensing range limitations.}

Evidently, (P1) is a complex MINLP problem with tightly coupled integer and continuous variables due to task-level interdependencies among $\mathbf{SC}^3$ functions. Employing exhaustive search to find the optimal solution would incur prohibitive computational complexity. For mission-critical applications, decision-making speed is essential to task efficiency. To address this challenge, we introduce a LOAC framework inspired by \cite{Huang}, which effectively integrates learning with optimization techniques to solve the MINLP problem efficiently.

\subsection{Learning-Optimization-Integrated Actor-Critic Framework}

\begin{figure*} [t]
\centering
\includegraphics[width=0.85\linewidth]{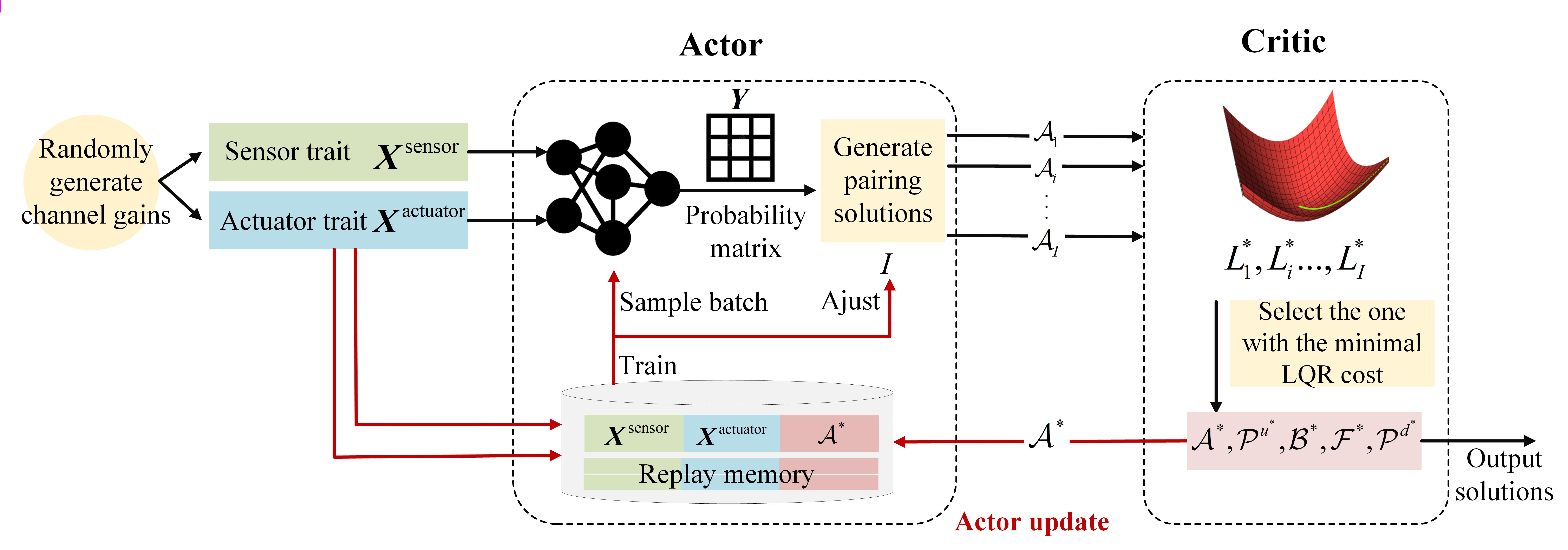}
\caption{The structure of the proposed LOAC framework, which includes a DNN-based actor and an optimization-based critic.} 
\label{LOAC}
\end{figure*}

The structure of the proposed LOAC framework is illustrated in Fig.~\ref{LOAC}. In this framework, a DNN-based actor generates sensor-actuator pairing decisions. Given a specific pairing decision, the optimization-based critic solves the resulting continuous resource allocation problem. To capture the interdependence between pairing and resource allocation, the critic evaluates pairing decision quality based on the achieved LQR cost. This evaluation is subsequently fed back to the actor to iteratively refine the DNN, thereby improving pairing decision quality over successive iterations. In the following, we introduce the LOAC framework in detail.

\subsubsection{Actor Module} The actor module comprises a DNN and a pairing generation algorithm. The DNN consists of three parts: a sensor-trait encoder, an actuator-trait encoder, and a pairing network.
The sensor-trait encoder takes as input the traits of $S$ sensors, denoted as $\mathbf{X}^{\text{sensor}}$, where each row corresponds to the trait vector of an individual sensor, as defined in \eqref{1}, and outputs the encoded sensor traits. Similarly, the actuator-trait encoder takes as input the traits of $K$ actuators, denoted as $\mathbf{X}^{\text{actuator}}$, where each row represents the trait vector of an individual actuator, as defined in \eqref{2}, and outputs the encoded actuator traits. The pairing network takes the encoded sensor and actuator traits as input and produces the raw pairing probability matrix, denoted as $\mathbf{Y}^{\text{raw}} = [y_{sk}^{\text{raw}}]$ with $y_{sk}^{\text{raw}} \ (0\leqslant y_{sk}^{\text{raw}}\leqslant 1)$ representing the pairing probability between sensor $s$ and actuator $k$.

To convert these scores into $I$ valid, one-to-one pairing solutions, denoted as $\mathcal{A}_i = \{a_{sk}^i\}$, we employ a sequential selection procedure detailed in {\textbf{Algorithm~\ref{alg:matching}}}. This procedure ensures that each sensor is assigned to at most one actuator.
To mitigate ordering bias in the selection process, the order of the $K$ actuators is randomly shuffled at the beginning of each solution generation attempt. Following this shuffled order $\mathcal{K}_{\text{shuffled}}$, each actuator $k$ sequentially selects a sensor based on a corrected probability distribution derived from $y_{sk}^{\text{raw}}$. Formally,
the probability of actuator $k$ selecting sensor $s$ is given by
\begin{equation}
\label{eq:pairing_prob}
y_{sk} = \frac{ y_{sk}^{\text{raw}} \cdot \mathbf{1}_{\{s \in \mathcal{S}_k \setminus \mathcal{S}_{\text{selected}}\}} }{ \sum_{j \in \mathcal{S}_k} \left( y_{jk}^{\text{raw}} \cdot \mathbf{1}_{\{j \notin \mathcal{S}_{\text{selected}}\}} \right) },
\end{equation}
where $\mathbf{1}_{\{\cdot\}}$ is the indicator function, which is 1 if the condition is true and 0 otherwise, and
$\mathcal{S}_{\text{selected}}$ is the set of selected sensors. Assuming the selected sensor index of actuator $k$ is $s_k$, then the selected sensor set is updated by
\begin{equation}
\label{21}
\mathcal{S}_{\text{selected}} = \mathcal{S}_{\text{selected}} \cup \{s_k\}.
\end{equation}
If at any step an actuator has no available sensors (i.e., the denominator in \eqref{eq:pairing_prob} vanishes), the current generation attempt is terminated and a new attempt begins. This process repeats until I  complete pairing solutions are successfully generated.

\begin{algorithm}[t]
\caption{Algorithm for Generating Pairing Solution} \small
\label{alg:matching}
\begin{algorithmic}[1]
\REQUIRE
{
	Number of pairing solutions to generate: $I$; \\
	Pairing score matrix: $\mathbf{Y}^{\text{raw}} = [y_{sk}^{\text{raw}}] \in \mathbb{R}^{S \times K}$; \\
	The compatible sensor set for each actuator: $\mathcal{S}_k, \ \forall k \in \{1, \dots, K\}$;\\
	Maximum number of attempts per solution: $N_{\text{max}}$.
}
\ENSURE
Set of $I$ valid pairing solutions: $\mathcal{A}$.

\STATE \textit{Initialize}: $\mathcal{A}\leftarrow \emptyset$;
\FOR{$i = 1$ to $I$}
\FOR{$n = 1$ to $N_{\text{max}}$}
\STATE \textit{Initialize for new attempt}: $\mathcal{S}_{\text{selected}} \leftarrow \emptyset$;
\STATE Temporary pairing solution $\mathcal{A}_i=\{a_{sk}^i\} \leftarrow \{0\}^{S\times K};$
\STATE Randomly shuffle actuator indices to get $\mathcal{K}_{\text{shuffled}}$;
\STATE valid\_attempt $\leftarrow$ \textbf{true};
\FOR{$k$ in $\mathcal{K}_{\text{shuffled}}$} 
\STATE Calculate the probability $\mathbf{y}_k = [y_{sk}]$ using \eqref{eq:pairing_prob};
\IF{$\sum_{s \in \mathcal{S}_k} y_{sk} = 0$} 
\STATE valid\_attempt $\leftarrow$ \textbf{false};
\STATE \textbf{break};
\ENDIF
\STATE Sample a sensor $s_k$ according to the distribution $\mathbf{y}_k$;
\STATE Update the selected set according to \eqref{21};
\STATE $a_{s_kk}^i \leftarrow 1$;
\ENDFOR
\IF{valid\_attempt \textbf{and} $\mathcal{A}_i \notin \mathcal{A}$} 
\STATE $\mathcal{A} \leftarrow \mathcal{A} \cup \{\mathcal{A}_i\}$;
\STATE success $\leftarrow$ \textbf{true};
\STATE \textbf{break}; 
\ENDIF
\ENDFOR
\ENDFOR
\end{algorithmic}
\end{algorithm}

\subsubsection{Critic Module}
For a given sensor-actuator pairing, (P1) is reduced to a continuous optimization problem. For simplicity, let $s_k$ denote the index of the sensor paired with actuator $k$. The problem can be reformulated as
\begin{subequations}
\begin{align}
\mbox{(P2)} \ \ &\min\limits_{\mathcal{P}^u, \mathcal{B},\mathcal{F},\mathcal{P}^d} \sum\limits_{k=1}^Kl_k \\
\text{s.t.} \ &l_k \geqslant\frac{n_k N \!\left( \mathbf{v}_k\right)|\det \mathbf{M}_k|^\frac{1}{n_k}} {2^{\frac{2}{n_k}(D^{\mathbf{SC}^3}_k-\log_2|\det \mathbf{A}_k|)}-1}+\mathrm{Tr}\left( \mathbf{\Sigma}_{\mathbf{v}_k}\mathbf{S}_k\right),\ \forall k \label{20b} \\
&D^{\mathbf{SC}^3}_k> \log_2|\det \mathbf{A}_k|, \ \forall k\\
&D^{\mathbf{SC}^3}_k\leqslant\min\{\rho_{s_k}D^u_k, D^d_k\}, \ \forall k \label{20c}\\
&D^u_k\leqslant t^u_k B_k\log_2(1+\frac{|h_{s_k}^u|^2p_{s_k}^u}{B_kN_0}),\ \forall k \label{20e} \ \\
&D^d_k\leqslant t^d_k B_k\log_2(1+\frac{|h_k^d|^2p_k^d}{B_kN_0}), \ \forall k \label{20ef}\\
&\frac{\gamma_{s_k}D^u_k}{f_k}\leqslant t^c_k, \ \forall k \label{20f}\\
& \eqref{16i}-\eqref{16k}.
\end{align}
\end{subequations}
(P2) is a non-convex problem with tightly coupled optimization variables of UL\&DL transmit power, bandwidth, and computing CPU frequency. To solve this problem, we first determine the optimal value of the UL transmit power
\begin{equation}
(p^u_{s_k})^*=p_{s_k,\max}^{u}, \forall k.
\end{equation}
This follows from the fact that the LQR cost lower bound in \eqref{20b} is a monotonically decreasing function of CNER. Furthermore, as defined in \eqref{20c}, CNER is determined by the performance bottleneck—specifically, the minimum among the cycle rates of UL transmission, edge computing, and DL transmission. Given that UL cycle rate increases monotonically with sensor transmit power, the optimal strategy is for the selected sensor to transmit at maximum power, thereby maximizing CNER and consequently minimizing the resulting LQR cost.

In addition, the computing-latency constraint \eqref{20f} can be equivalently transformed into a linear constraint as
\begin{equation}
\gamma_{s_k}D^u_k\leqslant t^c_kf_k, \forall k.
\end{equation}
On this basis, we obtain
\begin{subequations}
\begin{align}
\mbox{(P3)} \ \ &\min\limits_{\mathcal{B},\mathcal{F},\mathcal{P}^d} \sum\limits_{k=1}^Kl_k \\
\text{s.t.} \ &l_k \geqslant \frac{n_k N \!\left( \mathbf{v}_k\right)|\det \mathbf{M}_k|^\frac{1}{n_k}} {2^{\frac{2}{n_k}(D^{\mathbf{SC}^3}_k-\log_2|\det \mathbf{A}_k|)}-1}+\mathrm{Tr}\left( \mathbf{\Sigma}_{\mathbf{v}_k}\mathbf{S}_k\right), \forall k \label{25b} \\
&D^{\mathbf{SC}^3}_k> \log_2|\det \mathbf{A}_k|, \forall k \label{a25}\\
&{D^{\mathbf{SC}^3}_k\leqslant\min\{\rho_{s_k}D^u_k, D^d_k\}, \ \forall k} \label{25c}\\
&D^u_k\leqslant t^u_k B_k\log_2(1+\frac{|h_{s_k}^u|^2p_{s_k, \max}^{u}}{B_kN_0}),\ \forall k \label{25e} \ \\
&D^d_k\leqslant t^d_k B_k\log_2(1+\frac{|h_k^d|^2p_k^d}{B_kN_0}),\ \forall k \label{25f}\\
&{\gamma_{s_k}D^u_k\leqslant t^c_kf_k, \ \forall k} \label{25g}\\
&\eqref{16i-1}--\eqref{16k-1} \label{25k},
\end{align}
\end{subequations}
where the CNER constraint \eqref{25c} is relaxed into an inequality. This relaxation does not alter the optimal solution, as the optimum is achieved when CNER attains its maximum allowable value, rendering the inequality tight at optimality.
\begin{theorem}
\label{the}
The resource allocation problem (P3) is convex.
\end{theorem}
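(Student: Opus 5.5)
We prove convexity by treating (P3) as a problem over $(l_k, D^{\mathbf{SC}^3}_k, D^u_k, D^d_k, B_k, f_k, p^d_k)$ and showing three things: the objective is linear, every inequality constraint has the form ``convex function $\leqslant$ constant'' (equivalently, concave function $\geqslant$ something affine), and the remaining constraints are affine. The objective $\sum_k l_k$ is linear. Constraints \eqref{25g} and \eqref{16i-1}--\eqref{16k-1} are linear. Constraint \eqref{25c} splits into the two linear inequalities $D^{\mathbf{SC}^3}_k\leqslant \rho_{s_k}D^u_k$ and $D^{\mathbf{SC}^3}_k\leqslant D^d_k$, since a minimum is at least $x$ exactly when each term is. The strict stability constraint \eqref{a25} is linear and defines an open half-space, so it preserves convexity of the feasible set. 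If one insists on a closed feasible set, it can be replaced by $D^{\mathbf{SC}^3}_k\geqslant \log_2|\det\mathbf{A}_k|+\epsilon$; this is harmless because \eqref{25b} forces $l_k\to\infty$ as the margin vanishes.

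\emph{Rate constraints.} For \eqref{25e}, with $p^u$ fixed at $p^u_{s_k,\max}$, the map $B\mapsto B\log_2(1+c/B)$ with $c=|h^u_{s_k}|^2p^u_{s_k,\max}/N_0>0$ is the perspective of the concave function $x\mapsto\log_2(1+cx)$ evaluated at $x=1/B$. To see concavity directly, compute the second derivative; it equals $-c^2/\big(B(B+c)^2\ln 2\big)<0$. Hence \eqref{25e} reads $D^u_k-t^u_k g(B_k)\leqslant 0$ with $g$ concave, which is a convex constraint. For \eqref{25f}, the function $(B,p)\mapsto B\log_2\big(1+\frac{|h_k^d|^2 p}{N_0 B}\big)$ is the perspective of the concave function $p\mapsto \log_2(1+|h^d_k|^2p/N_0)$. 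Perspective preserves concavity on $B>0$, so the function is jointly concave, and the constraint is again convex. At $B_k=0$ we use the standard continuous extension by $0$, which keeps concavity on the closed domain.

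\emph{The LQR constraint.} Let $c_k=n_kN(\mathbf{v}_k)|\det\mathbf{M}_k|^{1/n_k}\geqslant 0$ and $x=D^{\mathbf{SC}^3}_k-\log_2|\det\mathbf{A}_k|>0$. Then \eqref{25b} is $l_k\geqslant \phi(x)+\text{const}$ with $\phi(x)=c_k/(2^{2x/n_k}-1)$. Writing $u=2^{2x/n_k}$ and $\lambda=2\ln2/n_k$, we have $\phi'(x)=-c_k\lambda u/(u-1)^2<0$. Differentiating once more gives $\phi''(x)=c_k\lambda^2u(u+1)/(u-1)^3>0$ for $u>1$. So $\phi$ is convex and decreasing on the stable region \eqref{a25}. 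Its composition with an affine function of $D^{\mathbf{SC}^3}_k$ is convex, and $\phi(\cdot)-l_k\leqslant 0$ is a convex constraint.

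All constraints therefore define convex sets and the objective is linear, so (P3) is convex. The main obstacle is confirming the joint concavity of the DL rate in $(B_k,p^d_k)$, together with its boundary behaviour at $B_k=0$. The plan handles this through the perspective-function argument rather than a full Hessian computation. If a direct check is wanted, one can verify that the $2\times 2$ Hessian has determinant zero and nonpositive trace, which confirms it is negative semidefinite.
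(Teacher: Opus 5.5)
Your proof is correct and follows the same structure as the paper's appendix. Both note that the objective and resource constraints are linear, split the $\min$ in \eqref{25c} into two linear inequalities, and use a second-derivative check to show the LQR bound is convex on the stable region; your $\phi''$ agrees with the paper's expression. The only real difference is how you show the rate functions are concave: you use the fact that the perspective of a concave function is concave, where the paper computes the Hessian of $ax\log_2(1+by/x)$ explicitly. You also treat the UL rate as a function of $B_k$ alone (since $p^u$ is fixed at its maximum) and handle the $B_k=0$ boundary, which are small refinements of an equivalent argument.
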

\begin{proof}
See Appendix.
\end{proof}
Based on {\bf{Theorem \ref{the}}}, (P3) can be efficiently solved using convex optimization tools.

\begin{algorithm}[htbp]
\caption{Algorithm in Learning-Optimization-Integrated Actor-Critic Framework}
\label{alg:LOAC}
\begin{algorithmic}[1]
\REQUIRE
Number of pairing solutions $I$, the maximum training epochs $E$, batch size $G$, and learning rate $lr$;\\
The EIH location and sensor traits \eqref{1} as well as actuator traits \eqref{2} excluding channel gains;\\
Communication parameters: $B_{\max}$, $p_{\max}^d$, $f_c$, $c$, $\eta_{\text{LOS}}$, $\eta_{\text{NLOS}}$, $a$, $b$, $t^u_k$, and $t^d_k$;\\
Computing parameters: $f_{\max}$ and $t^c_k$;\\
Control parameters: $\mathbf{A}_k$, $\mathbf{B}_k$, $\mathbf{Q}_k$, $\mathbf{R}_k$, $n_k$, $m_k$, and the distribution of the system noise $\mathbf{v}_k$.
\STATE Calculate the effective sensor set $\mathcal{S}_k$ according to \eqref{4}.
\STATE Calculate the large-scale fading for each sensor and actuator according to \eqref{6};
\STATE \textbf{Initialize:} replay buffer $\mathcal{D}\leftarrow \emptyset$;
\FOR{epoch $e = 1$ to $E$}
\STATE Randomly generate UL\&DL small-scale fading (i.i.d. $\mathcal{CN}(0,1)$) and compute channel gains according to \eqref{5}.
\STATE Construct and normalize the trait matrices $\mathbf{X}^{\text{sensor}}$ and $\mathbf{X}^{\text{actuator}}$;
\STATE Input $\mathbf{X}^{\text{sensor}}$ and $\mathbf{X}^{\text{actuator}}$ into the DNN to obtain the pairing probability matrix $\mathbf{Y}$;
\STATE Generate pairing candidates $\mathcal{A}=\{\mathcal{A}_i\}_{i=1}^I$ using \textbf{Algorithm~\ref{alg:matching}};
\FOR{each candidate $\mathcal{A}_{i}$}
\STATE Solve the resource allocation problem (P3) to obtain the LQR cost and the associated allocation solution.
\ENDFOR
\STATE Select the best solution according to \eqref{26};
\STATE Store experience tuple $(\mathbf{X}^{\text{sensor}}, \mathbf{X}^{\text{actuator}}, \mathcal{A}^*)$ in $\mathcal{D}$;
\IF {$|\mathcal{D}|\geqslant G$}
\STATE Sample a mini-batch from $\mathcal{D}$ and update the DNN;
\ENDIF
\STATE Decrease $I$ and $lr$ gradually as $e$ increases;
\ENDFOR
\ENSURE Pairing solution $\mathcal{A}^*$ and resource allocation results: $\mathcal{P}^{u*}$, $\mathcal{B}^*$, $\mathcal{F}^*$, $\mathcal{P}^{d*}$.
\end{algorithmic}
\end{algorithm}

\subsubsection{Actor-Critic Interaction}
{To effectively supervise the DNN-generated candidate solutions, the critic evaluates each pairing solution, i.e., $\mathcal{A}_i$, by solving the continuous resource allocation problem (P3). The corresponding LQR cost for the $i$-th pairing solution, denoted as $L_i^*$, is computed as:
\begin{equation}
\label{25}
L_i^* = \min\limits_{\mathcal{P}^u, \mathcal{B}, \mathcal{F}, \mathcal{P}^d} \sum_{k=1}^K l_{k} \ \Bigg|_{\mathcal{A} = \mathcal{A}_i}.
\end{equation}
Subsequently, the critic selects the pairing solution that achieves the lowest overall LQR cost. This optimal decision is then fed back to the actor, which is mathematically expressed as:
\begin{equation}
\label{26}
\begin{aligned}
	i^* &= \arg\min_i L_i^*, \\
	\big(\mathcal{A}^*, \mathcal{P}^{u*}, \mathcal{B}^*, \mathcal{F}^*, \mathcal{P}^{d*}\big) &= \big(\mathcal{A}_{i^*}, \mathcal{P}^u_{i^*}, \mathcal{B}_{i^*}, \mathcal{F}_{i^*}, \mathcal{P}^d_{i^*}\big).
\end{aligned}
\end{equation}}
Then, in the actor module, the matrices of sensor and actuator traits as well as the pairing solution provided by the critic form an experience tuple, $\{\mathbf{X}^{\text{sensor}},\mathbf{X}^{\text{actuator}},\mathcal{A}^{*}\}$, which is stored in the replay buffer, denoted as $\mathcal{D}$.
For each channel realization, the DNN is updated using a mini-batch sampled from the replay buffer. To enhance training efficiency, the number of candidate pairing solutions $I$ is gradually reduced throughout the training process. Notably, since the critic module solves a convex optimization problem and yields the globally optimal LQR cost for each pairing solution, its evaluations are highly accurate. This precise feedback not only ensures fast convergence but also effectively guides the DNN toward an optimal policy. The complete algorithm in the proposed LOAC framework is summarized in {\bf Algorithm~\ref{alg:LOAC}}.

{
\subsubsection{Computational Complexity Analysis}
\label{complexity}
To rigorously evaluate the computational efficiency of the proposed LOAC scheme, we provide a comparative complexity analysis against two benchmarks: exhaustive search and the optimization-based SCA scheme. For a standard comparison, the complexity of solving the continuous convex sub-problems is evaluated assuming the adoption of the standard interior-point method (IPM). According to \cite{CovOpt}, the complexity of an IPM for a convex problem with $N_{\text{var}}$ optimization variables and $N_{\text{cons}}$ inequality constraints is $\mathcal{O}\left(\left(N_{\text{var}} + N_{\text{cons}}\right)^{3.5} \log\left(1/\epsilon\right)\right)$, where $\epsilon$ denotes the duality gap threshold. On this basis, the complexity profiles of the three methods are delineated as follows:
\begin{itemize}
\item Proposed LOAC Scheme: The complexity of the LOAC scheme is formulated as $\mathcal{O}\left(LM^2 + \left(12K + 3\right)^{3.5} \log\left(1/\epsilon\right)\right)$, where $L$ is the number of DNN layers and $M$ is the maximum number of neurons per layer. The first term, $LM^2$, represents the computational overhead of the DNN's feed-forward inference used to predict the optimal sensor-actuator pairing. The second term accounts for solving the resulting pairing-fixed convex sub-problem, which involves $3K$ variables and $9K + 3$ constraints. By decoupling the combinatorial pairing from resource allocation, the LOAC scheme achieves polynomial-time complexity.

\item Exhaustive Search: This scheme serves as the performance upper bound by exploring all possible sensor-actuator permutations. Its complexity is expressed as $\mathcal{O}\left(_SP_K\left(12K + 3\right)^{3.5} \log\left(1/\epsilon\right)\right)$, where $_S P_K$ denotes the permutations of $K$ elements from $S$ available sensors. Due to the factorial growth of $_S P_K$, this approach becomes computationally prohibitive as the network scale increases, making it impractical for real-time 6G orchestration.

\item Optimization-based Scheme: This scheme relaxes the binary pairing variables $a_{sk} \in \{0, 1\}$ into continuous variables $\widetilde{a}_{sk} \in [0, 1]$ and applies SCA to solve the resulting continuous non-convex optimization problem \cite{survey3}. 
The relaxed problem includes $SK + 3K + S$ variables and $SK + 3S + 11K + 3$ constraints, leading to a complexity of $\mathcal{O}\left(J \left(2SK + 4S + 14K + 3\right)^{3.5} \log\left(1/\epsilon\right)\right)$, where $J$ denotes the number of iterations. For the recovery of binary solutions from the continuous results, heuristic recovery mechanisms such as simple threshold rounding or greedy selection based on probability rankings are typically employed \cite{Zhang2019MaxMin}. Given that the number of sensors significantly exceeds the number of actuators, i.e., $S \gg K$, in practical 6G deployments, the complexity of this algorithm is heavily driven by the product $SK$. 
\end{itemize}
}

\section{Simulation Results and Discussion}
\label{section 4}
\begin{figure} [t]
\centering
\includegraphics[width=0.9\linewidth]{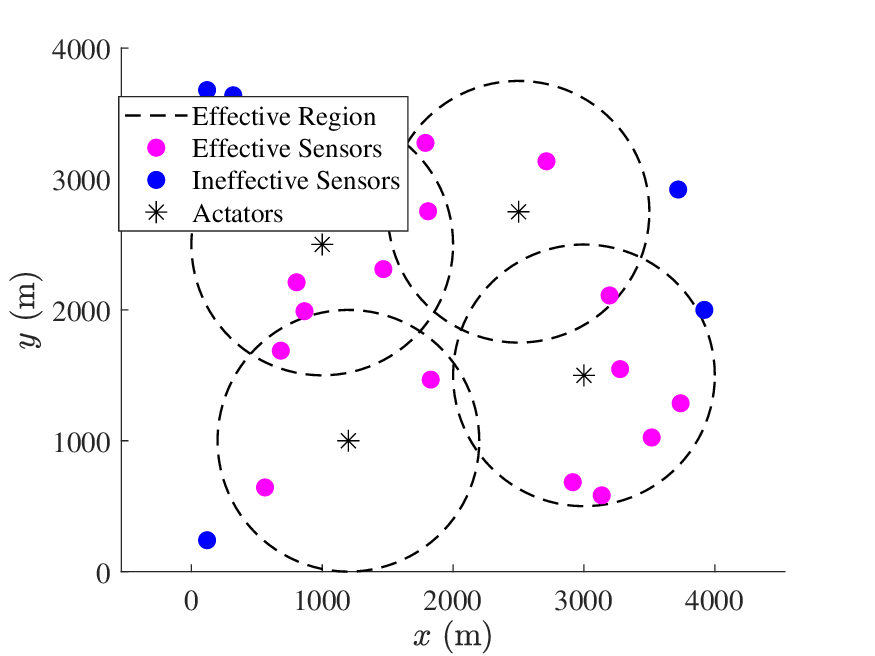}
\caption{Topology of the 6G-enabled autonomous operation system, including $S=20$ sensors and $K=4$ actuators.}
\label{topo}
\end{figure}
\begin{figure*} [t]
\centering
\includegraphics[width=0.65\linewidth]{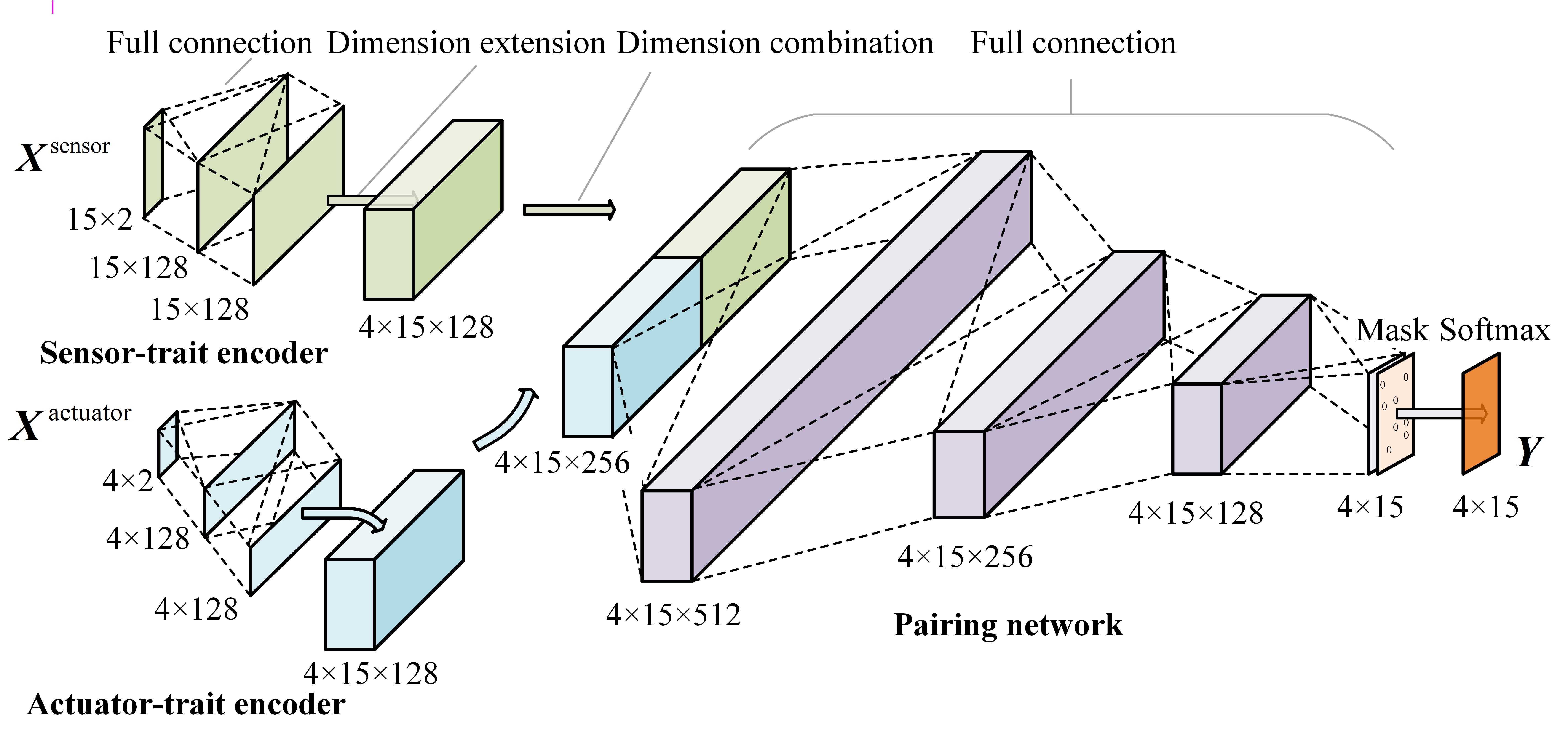}
\caption{Structure of the DNN, including the sensor-trait encoder, the actuator-trait encoder, and the pairing network.}
\label{DNN}
\end{figure*}

We present the simulation topology in Fig.~\ref{topo}. The layout is a $4000$ m $\times$ $4000$ m square, with the EIH positioned at $[2000, 2000]$ m and a height of $h=300$ m. The system comprises $K=4$ actuators and $S=20$ sensors, organized into $K=4$ $\mathbf{SC}^3$ closed loops, each responsible for an independent control task. Each sensor has a circular sensing range of $r_s=1000$ m \cite{sensing}. By calculating Euclidean distances between sensors and actuators, effective sensors, those with at least one actuator within their sensing range, can be identified. In our topology, $S=15$ effective sensors are highlighted in pink; the remaining $5$ sensors are excluded from pairing as no actuators operate within their ranges.
The topology is fixed in our simulations. In practical autonomous systems, sensor and actuator positions may vary over time. The proposed LOAC framework naturally adapts to such variations: the critic continuously evaluates pairing quality and feeds back updated assessments to the actor, enabling online adjustment of the pairing network. Upon detecting topology changes, the number of candidate pairing solutions $I$ and the learning rate $lr$ can temporarily revert to larger values, allowing the LOAC network to re-explore the solution space and stabilize under the new configuration. The rollback procedure requires flexible design for different application scenarios, which is beyond the scope of this paper.

We summarize the simulation parameters in Table~\ref{tab1}. In this simulation, we consider a special case where $\mathbf{R}_k=\mathbf{0}$, which emphasizes the state deviation cost while ignoring the control cost \cite{Kostina}. Under this assumption, the computation of $\mathbf{M}_k$ in~\eqref{17} simplifies to
\begin{equation}
\mathbf{M}_k
= \mathbf{S}_k \mathbf{B}_k \big( \mathbf{B}_k^{\mathrm{T}} \mathbf{S}_k \mathbf{B}_k \big)^{-1} \mathbf{B}_k^{\mathrm{T}} \mathbf{S}_k.
\end{equation}
If $\mathbf{B}_k^{\mathrm{T}} \mathbf{S}_k \mathbf{B}_k$ is invertible, then the optimal solution satisfies $\mathbf{M}_k = \mathbf{S}_k$. Substituting this into the Riccati recursion yields
\begin{equation}
\mathbf{S}_k = \mathbf{Q}_k + \mathbf{A}_k^{\mathrm{T}} \big( \mathbf{S}_k - \mathbf{M}_k \big) \mathbf{A}_k
= \mathbf{Q}_k.
\end{equation}
Therefore, for this special case, we obtain
\begin{equation}
\mathbf{S}_k = \mathbf{M}_k = \mathbf{Q}_k,
\end{equation}
which becomes independent of the specific values of $\mathbf{A}_k$ and $\mathbf{B}_k$, provided that $\mathbf{B}_k$, $\mathbf{S}_k$, and $\mathbf{B}_k^{\mathrm{T}} \mathbf{S}_k \mathbf{B}_k$ are invertible. For this reason, the explicit values of $\mathbf{A}_k$ and $\mathbf{B}_k$ are not specified in this simulation case.

\begin{table}[t]
\centering
\caption{Simulation Parameters}
\label{tab1}
\begin{tabular}{p{2cm}p{6.2cm}}
\toprule
\textbf{Category} & \textbf{Parameter Description} \\
\midrule
\multicolumn{2}{c}{\textbf{Communication Parameters}} \\
\midrule
$p^{u}_{s,\max}$ & Maximum UL transmit power of each sensor: $0.1$ W \\
$p^d_{\max}$ & Maximum DL transmit power of the EIH: $4$ W \\
$f_c$ & Carrier frequency: $2000$ MHz \\
$c$ & Speed of light: $3 \times 10^8$ m/s \\
$\eta_{\text{LOS}}$, $\eta_{\text{NLOS}}$, $a$, $b$
& Environmental parameters: $0.1$, $21$, $5.0188$, $0.3511$ \\
$t^u_k$ & UL transmission time: $4$ ms, $\forall k$ \\
$t^d_k$ & DL transmission time: $1$ ms, $\forall k$ \\
\midrule
\multicolumn{2}{c}{\textbf{Computing Parameters}} \\
\midrule
$f_{\max}$ & The maximal computing CPU frequency: $1$ GHz\\
$\gamma_s$ & CPU cycles per bit (randomly generated from a uniform distribution of $[50, 500]$) \\
$\rho_s$ & Information extraction ratio: $0.01$, $\forall s$ \\
$t^c_k$ & Computing time: $4$ ms, $\forall k$ \\
\midrule
\multicolumn{2}{c}{\textbf{Control Parameters}}\\
\midrule
$n_k$, $m_k$ & State and input dimensions: $100$, $\forall k$ \\
$\log_2|\det\mathbf{A}_k|$ & Intrinsic entropy: $[10, 100, 60, 40]$ \\
$\mathbf{R}_k$ & Input cost matrix: $\mathbf{0}_{100}$ \\
$\mathbf{Q}_k$ & State cost matrix: $\mathbf{I}_{100}$ \\
$\mathbf{v}_k$ & System noise: white Gaussian distribution with $\mathbf{\Sigma}_{\mathbf{v}_k}=0.01\times\mathbf{I}_{100}$, $\forall k$ \\
\bottomrule
\end{tabular}
\end{table}
Fig.~\ref{DNN} illustrates the architecture of the DNN employed in our simulation. Given that all sensors have the same sensing range, maximum transmit power, and information extraction ratio, the distinguishing characteristics are used as the input to the sensor-trait encoder, i.e., the UL channel gain and the CPU frequency required to process one-bit sensing data. Therefore, the $s$-th row of the input matrix $\mathbf{X}^{\text{sensor}} \in \mathbb{R}^{S \times 2}$ is $\left[|h^u_s|^2, \gamma_s\right]$.
Similarly, as we set the same value for $\mathbf{Q}_k$, $\mathbf{R}_k$, and $\mathbf{\Sigma}_{\mathbf{v}_k}$ across all controlled systems, the actuator-trait encoder takes the DL channel gain and the intrinsic entropy of the associated controlled system as the input, i.e., $\mathbf{X}^{\text{actuator}} \in \mathbb{R}^{K \times 2}$, where the $k$-th row is $\left[|h^d_k|^2, \log_2|\det\mathbf{A}_k|\right]$. Prior to feeding $\mathbf{X}^{\text{sensor}}$ and $\mathbf{X}^{\text{actuator}}$ into the DNN, the intrinsic entropy is normalized through linear min-max scaling, while the other parameters are normalized via logarithmic min-max scaling. The sensor-trait and actuator-trait encoders include a single fully connected hidden layer with a LeakyReLU activation function (negative slope coefficient of 0.1). Their output dimensions are $S \times 128$ and $K \times 128$, respectively. Subsequently, the encoded sensor and actuator traits are expanded to dimensions $S \times K \times 128$ and concatenated along the trait axis, resulting in a joint tensor of shape $S \times K \times 256$. This tensor is then passed through the pairing network, which consists of three fully connected hidden layers with output dimensions $S \times K \times [512, 256, 128]$, each followed by a LeakyReLU activation. The output of the final layer is a matrix $\mathbf{Y}^{\text{raw}} = [y_{sk}^{\text{raw}}] \in \mathbb{R}^{S \times K}$, where each entry $y_{sk}^{\text{raw}}$ represents the unnormalized pairing score between sensor $s$ and actuator $k$. A column-wise softmax function is applied to $\mathbf{Y}^{\text{raw}}$ to obtain a probability matrix. To ensure the sensing-range constraint, for any actuator $k$ and sensor $s \notin \mathcal{S}_k$ (defined in \eqref{4}), the corresponding probability $y_{sk}$ is set to zero. As a result, the pairing probability matrix is corrected to satisfy,
\begin{equation}
\sum\limits_{s=1}^S y_{sk} = 1, \quad \text{with} \quad y_{sk} = 0 \ \text{if} \ s \notin \mathcal{S}_k.
\end{equation}

\begin{figure} [t]
\centering
\includegraphics[width=0.9\linewidth]{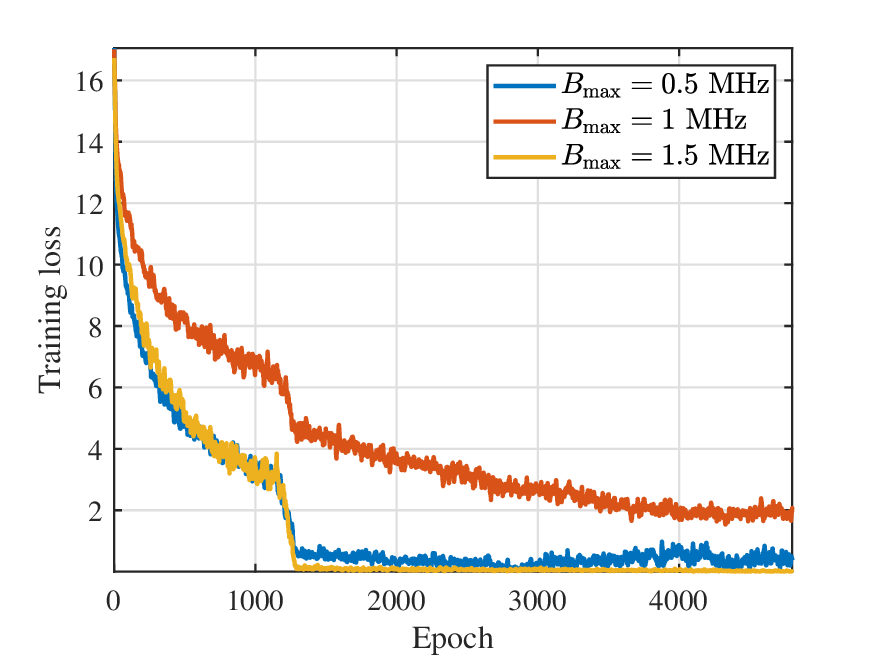}
\caption{Training loss of the DNN under different bandwidth conditions.}
\label{Fig7}
\end{figure}

During the training stage, the small-scale fading is randomly generated according to a complex Gaussian distribution, $\mathcal{CN}(0,1)$. For each channel realization, the DNN calculates a pairing probability matrix $\mathbf{Y}$ and generates $I$ candidate solutions using {\bf{Algorithm} 1}. The initial number of candidate pairing solutions is set to $I = 32$ and is halved every $512$ training epochs, with a minimum of $I = 2$. The maximum number of attempts for generating each pairing solution is $100$. The training process employs a replay buffer with a capacity of $1280$ and a batch size of $128$. We adopt the Adam optimizer with an initial learning rate of $10^{-3}$, which is decayed by a factor of $\frac{1}{\sqrt{2}}$ every 256 epochs to facilitate stable convergence. The cross-entropy loss function is applied to train the DNN.

Fig.~\ref{Fig7} presents the training loss curve over $4800$ epochs, where the loss values are smoothed using a moving average with a window size of 10 for clarity. The results demonstrate that the network converges fast. More importantly, under extreme bandwidth conditions, when $B_{\max} = 0.5$ MHz (limited) or $B_{\max} = 1.5$ MHz (adequate), the proposed scheme converges within 1500 epochs. This rapid convergence underscores the learning efficiency of the LOAC framework, wherein the critic module provides effective feedback that guides the DNN toward optimal decision-making over time.

\begin{figure} [t]
\centering
\includegraphics[width=0.8\linewidth]{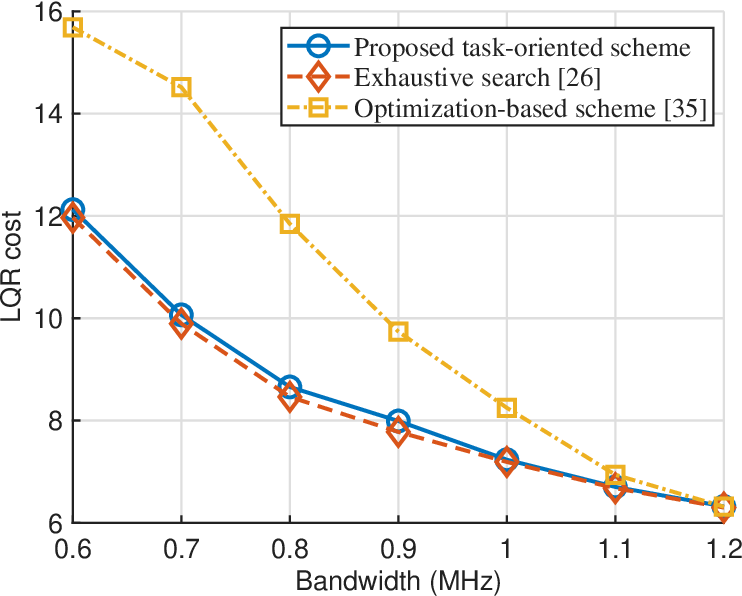}
\caption{{LQR cost achieved by the proposed scheme, exhaustive search, and the optimization-based baseline.}}
\label{Fig9}
\end{figure}
In the following simulations, all the results are averaged over 100 independently generated random channel realizations to ensure statistical reliability. 

Fig.~\ref{Fig9} compares the performance of different algorithms for solving the MINLP problem. Specifically, the proposed LOAC scheme is evaluated against exhaustive search \cite{Li2026Joint} and an optimization-based scheme \cite{Zhang2019MaxMin}. The optimal solution, serving as the performance upper bound, is obtained by exhaustively enumerating all feasible sensor-actuator pairings to minimize the LQR cost. Meanwhile, the optimization-based scheme relaxes integer pairing variables into continuous values, applies SCA to tackle the resulting non-convex problem, and recovers a feasible discrete pairing via index-ordered greedy selection. Simulation results demonstrate that the proposed LOAC scheme achieves near-optimal control performance, closely approaching that of exhaustive search. In contrast, the optimization-based scheme exhibits considerable performance loss, primarily stemming from continuous relaxation and discrete rounding operations. Furthermore, as detailed in Section~\ref{complexity}, exhaustive search incurs prohibitive factorial complexity, whereas the proposed LOAC scheme maintains low polynomial-time complexity. Consequently, the LOAC scheme emerges as an effective, engineering-friendly algorithm for solving MINLP problems in time-varying environments.

\begin{figure} [t]
\centering
\includegraphics[width=0.8\linewidth]{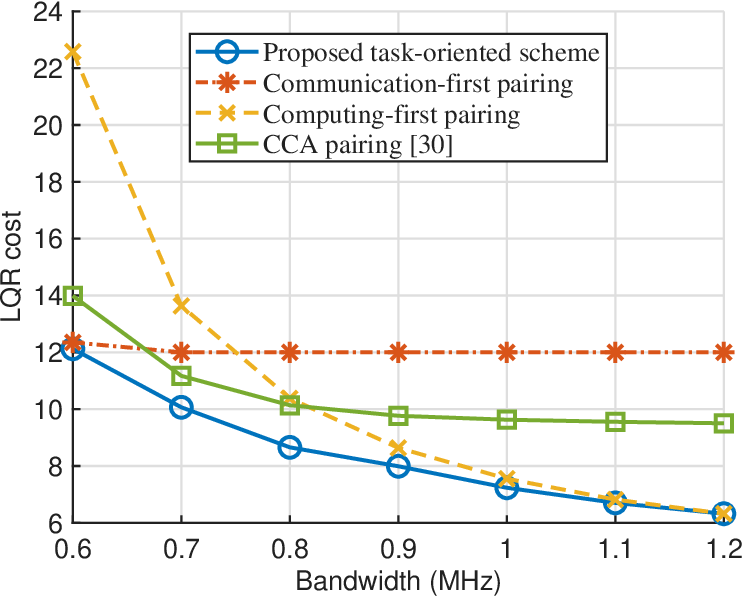}
\caption{{LQR cost comparison between the proposed scheme and three representative pairing baselines.}}
\label{Fig8}
\end{figure}

{In Fig. \ref{Fig8}, we evaluate the proposed scheme against three heuristic pairing schemes: communication-first pairing, computing-first pairing, and communication-computation-aware (CCA) pairing \cite{Gu2023Communication}. For a fair comparison, the continuous resource allocation for all baselines is obtained by solving (P3). These baseline schemes employ the Hungarian algorithm to find the optimal matching based on a designed utility matrix, i.e., $\mathbf{U} \in \mathbb{R}^{S \times K}$. Each element in the utility matrix, denoted by $u_{sk}$, represents the pairing score between sensor $s$ and actuator $k$. First, actuators are prioritized based on the instability of their assigned control tasks. The score of actuator $k$, denoted by $u_{k}^{\text{actuator}}$, is defined by the normalized intrinsic entropy:
\begin{equation}
u_{k}^{\text{actuator}} = \frac{\log_2|\det\mathbf{A}_k|}{\sum\limits_{j=1}^K \log_2|\det\mathbf{A}_j|}, \quad \forall k.
\end{equation}
Next, depending on the pairing schemes, the score of each candidate sensor $s$, denoted by $u_{sk}^{\text{sensor}}$, is mathematically formulated as follows:
\begin{itemize}
\item Communication-first pairing: Focuses solely on the channel conditions, given by
\begin{equation}
u_{sk}^{\text{sensor}} = \frac{|h^u_s|^2}{\sum\limits_{j \in \mathcal{S}_k} |h^u_j|^2}.
\end{equation}
\item Computing-first pairing: Focuses solely on the computational load, given by
\begin{equation}
u_{sk}^{\text{sensor}} = \frac{1/\gamma_s}{\sum\limits_{j \in \mathcal{S}_k} 1/\gamma_j}.
\end{equation}
\item CCA pairing: Balances communication and computing metrics via a weighting hyperparameter $\omega \in [0,1]$, given by
\begin{equation}
u_{sk}^{\text{sensor}} = \omega \frac{|h^u_s|^2}{\sum\limits_{j \in \mathcal{S}_k} |h^u_j|^2} + (1-\omega) \frac{1/\gamma_s}{\sum\limits_{j \in \mathcal{S}_k} 1/\gamma_j}.
\end{equation}
\end{itemize}
The elements of the overall utility matrix $\mathbf{U}$ are then constructed as:
\begin{equation}
u_{sk} = 
\begin{cases} 
u_{sk}^{\text{sensor}} \times u_{k}^{\text{actuator}}, & \text{if } s \in \mathcal{S}_k, \\ 
0, & \text{otherwise}. 
\end{cases}
\end{equation}
It is observed that the proposed scheme consistently outperforms the three benchmark pairing schemes. In the bandwidth-limited regime, communication-first pairing performs near-optimally, as overcoming the transmission bottleneck is the primary concern. However, as available bandwidth increases, the communication bottleneck relaxes, and pairing priority smoothly shifts toward computing-first pairing to alleviate computational load at the EIH. For CCA pairing, the hyperparameter $\omega \in [0,1]$ governs this trade-off: $\omega=1$ yields pure communication-first pairing, while $\omega=0$ reduces to pure computing-first pairing. The specific setting of $\omega=0.3$ provides intermediate performance between these two extremes. These results reveal a fundamental insight: the essence of sensor-actuator pairing lies in dynamically balancing communication and computing capabilities to prevent any single bottleneck from constraining overall closed-loop performance. By treating the $\mathbf{SC}^3$ closed loop as an integrated unit, the proposed DNN-based pairing automatically achieves this dynamic balance.}

{
\begin{figure} [t]
\centering
\includegraphics[width=0.8\linewidth]{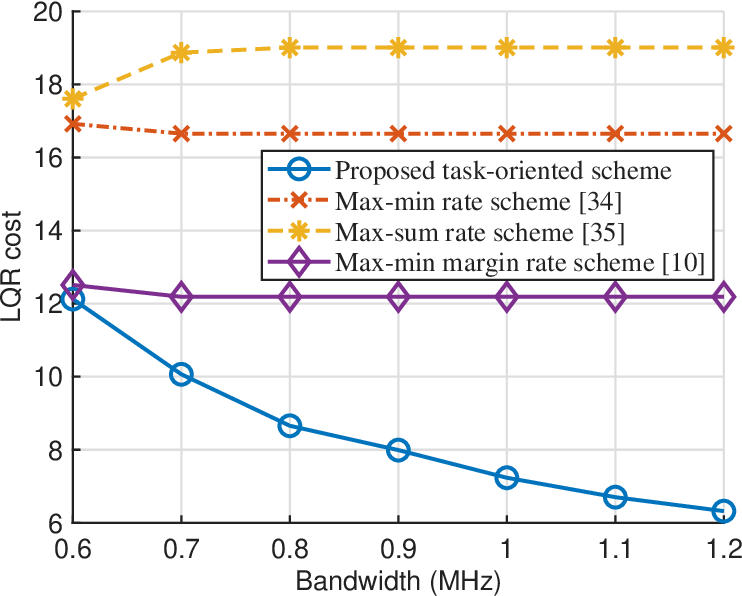}
\caption{{LQR cost comparison between the proposed scheme and communication-oriented benchmarks.}}
\label{fig5}
\end{figure}
In Fig. \ref{fig5}, we compare the proposed LOAC scheme with three communication-oriented schemes: the max-sum rate scheme \cite{Sekander2017Decoupled}, the max-min rate scheme \cite{Zhang2019MaxMin}, and the max-min margin rate scheme \cite{Fang1}. To maximize the communication efficiency, the three compared schemes adopt the communication-first pairing strategy. For a fair comparison, we adapt these schemes to jointly optimize bandwidth, transmit power, and CPU frequency within our framework. These schemes maximize a rate-based objective and include a control constraint to ensure basic stability: $D^{\mathbf{SC}^3}_k \geqslant \log_2|\det\mathbf{A}_k| + \Delta_k$. Here, $\Delta_k=15$ (bits/$\mathbf{SC}^3$ cycle) is a predefined stability margin. 
The objectives of the three benchmarks are defined as follows:
\begin{itemize}
\item Max-Sum Rate Scheme: Maximizes the total CNER across all $\mathbf{SC}^3$ closed loops, i.e., $\max \sum\limits_{k=1}^K D^{\mathbf{SC}^3}_k$.
\item Max-Min Rate Scheme: Maximizes the worst-case CNER to ensure fairness, i.e., $\max \min\limits_{k} D^{\mathbf{SC}^3}_k$.
\item Max-Min Margin Rate Scheme: Maximizes the worst-case margin rate by subtracting the intrinsic entropy, i.e., $\max \min\limits_{k} \left( D^{\mathbf{SC}^3}_k - \log_2|\det\mathbf{A}_k| \right)$.
\end{itemize}

Simulation results demonstrate that the proposed scheme markedly outperforms all three baselines. Notably, the LQR cost of the max-sum-rate scheme increases with additional bandwidth. This anomaly stems from the greedy nature of the sum-rate objective: to maximize total rate, the optimizer allocates most shared resources to $\mathbf{SC}^3$ closed loops with favorable communication and computing capabilities, forcing the remaining loops to operate near their stability margins. Since LQR cost increases sharply near the stability boundary but decreases only marginally within the stable region, this imbalance ultimately inflates the total cost.
The max-min-rate scheme considers all $\mathbf{SC}^3$
closed loops to ensure rate fairness. However, different control tasks require different control efforts; equalizing absolute rates leaves highly unstable tasks barely stable while naturally stable tasks receive surplus resources. Consequently, the LQR cost remains high and shows minimal improvement with increased bandwidth.
Regarding the max-min-margin-rate scheme, it accounts for varying control requirements by aligning its objective with task instability. Consequently, it achieves significantly better performance compared to the two traditional rate-based schemes. The remaining performance gap in the high-bandwidth region primarily stems from its communication-first pairing strategy. This scheme serves as a strong heuristic alternative for minimizing task-oriented LQR cost, consistent with the theoretical analysis in \cite{Fang1}.
In summary, these results underscore the importance of task-oriented design for improving resource efficiency in autonomous operation systems. Our scheme integrates task performance directly into closed-loop design, effectively bridging the gap between task-agnostic rate maximization and actual physical control performance.
}

{ 
\begin{figure}[t]
\centering
\includegraphics[width=0.8\linewidth]{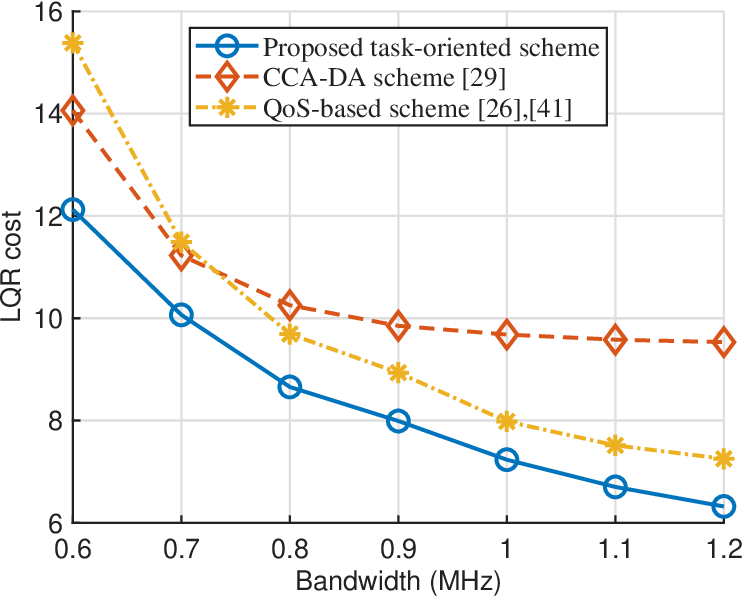}
\caption{{LQR cost comparison between the proposed scheme and control-aware baselines.}}
\label{fig7}
\end{figure}

In Fig.~\ref{fig7}, we demonstrate the superiority of the proposed scheme by comparing it against two task-aware baselines: a quality-of-service (QoS)-based scheme \cite{Li2026Joint,varma2024resource} and a CCA pairing with decoupled allocation (CCA-DA) scheme \cite{Tran}. The implementation details are as follows:
\begin{enumerate}
	\item \textbf{QoS-based scheme} \cite{Li2026Joint,varma2024resource}: Inspired by 5G QoS allocation paradigms \cite{varma2024resource}, this scheme utilizes the intrinsic entropy of control tasks to determine proportional allocation ratios for continuous resources, given by $\frac{\log_2|\det(\mathbf{A}_k)|}{\sum\limits_{j=1}^K\log_2|\det\mathbf{A}_j|}$. The optimal discrete pairing solution is subsequently obtained through exhaustive search \cite{Li2026Joint}.
	\item \textbf{CCA-DA scheme} \cite{Tran}: Representing the widely adopted decoupled methodology for tackling MINLPs, this approach separates the optimization of pairing and resource allocation. The CCA pairing scheme with weight $\omega=0.3$ is applied to pair sensors and actuators. Subsequently, continuous communication and computing resources are optimized independently, assuming that the counterpart domain possesses infinite cycle rates.
\end{enumerate}
As demonstrated by simulation results, the proposed scheme significantly outperforms the QoS-based baseline. The latter leverages the concept of 5G differentiated services, allocating resources based on coarse-grained task requirements. However, this approach leads to a mismatch between dynamic task demands and assigned resources, preventing their effective conversion into task-level efficiency. While subsequent exhaustive search prevents further performance degradation, it incurs prohibitive computational complexity. Regarding the CCA-DA scheme, its performance gap relative to the proposed scheme exposes a fundamental limitation of decoupled designs: they neglect the interplay between communication and computing, as well as the intrinsic coupling between sensor-actuator pairing and resource allocation. Consequently, $\mathbf{SC}^3$ closed-loop performance is constrained by a bottleneck in either communication or computing, leaving counterpart resources underutilized. Ultimately, these comparisons underscore the necessity of treating the $\mathbf{SC}^3$ closed loop as an integrated entity. Through joint pairing and resource allocation, our scheme efficiently utilizes all available communication and computing resources to maximize control performance.}

\begin{figure} [t]
\centering
\includegraphics[width=0.9\linewidth]{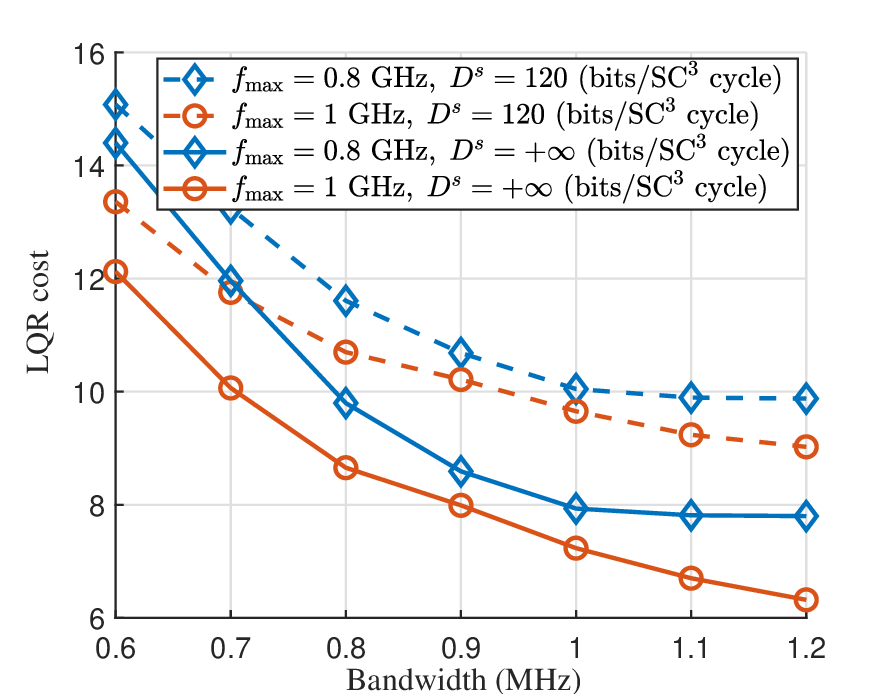}
\caption{LQR cost under different sensing cycle rates and CPU frequency constraints.}
\label{Fig6}
\end{figure}

Fig.~\ref{Fig6} presents the LQR cost under varying sensing cycle rates and CPU frequency constraints. For simplicity, we assume that all sensors share the same sensing cycle rate, denoted as $D^{s}$. The results show that when the sensing cycle rate is finite, the LQR cost increases as it becomes the bottleneck. This occurs because the limited sensing rate reduces the achievable CNER, thereby impairing control performance. Furthermore, reducing the maximum available CPU frequency also elevates the LQR cost. When $f_{\max} = 0.8$~GHz, the LQR cost saturates once bandwidth exceeds approximately $1$~MHz, indicating that computing becomes the performance bottleneck in this regime. These observations highlight the intrinsic interdependence within $\mathbf{SC}^3$ closed loops: balanced sensing, communication, and computing capabilities are essential for maintaining desirable closed-loop performance. This, in turn, underscores the necessity of the proposed joint pairing and resource allocation strategy.

\section{Conclusions}
\label{section 5}
In this paper, we have investigated a 6G-enabled autonomous operation system. We proposed a system-level closed-loop optimization scheme that jointly optimizes task orchestration (sensor-actuator pairing) and resource scheduling (communication and computing resource allocation) to minimize the sum LQR cost. To address the resulting MINLP problem, we developed an LOAC framework that achieves near-optimal solutions with low computational complexity. Simulation results validated the superiority of the proposed scheme, demonstrating that it adaptively aligns sensor and actuator characteristics with available communication and computing capabilities to form efficient $\mathbf{SC}^3$ closed loops. By enabling dynamic task-resource bidirectional adaptation, the proposed scheme opens new avenues for 6G network design, fundamentally differing from traditional methods that target either network performance or isolated task optimization.

\appendix
\begin{center}
	{Proof of \textbf{Theorem 1}}
\end{center}

According to the definition in \cite{CovOpt}, a problem can be identified convex if it has a convex objective and convex constraints. In (P3), the objective and constraints \eqref{a25}, \eqref{25g}--\eqref{25k} are linear. Therefore, we only need to prove the convexity of the LQR-cost constraint \eqref{25b}, the CNER constraint \eqref{25c}, and the UL\&DL cycle rate constraints \eqref{25e} and \eqref{25f}. For the LQR-cost constraint, we use a function, $g(D^{\mathbf{SC}^3}_k)$, to represent its right-hand side expression
\begin{equation}
	g(D^{\mathbf{SC}^3}_k)\triangleq \frac{n_k N \!\left( \mathbf{v}_k\right)|\det(\mathbf{M}_k)|^\frac{1}{n_k}} {2^{\frac{2}{n_k}(D^{\mathbf{SC}^3}_k-\log_2|\det(\mathbf{A}_k)|)}-1}+\mathrm{Tr}\left( \mathbf{\Sigma}_{\mathbf{v}_k}\mathbf{S}_k\right).
\end{equation}
On this basis, we calculate the second-order derivative of $g(D^{\mathbf{SC}^3}_k)$
\begin{equation}
	\begin{aligned}
		&\frac{\partial^2 g(D^{\mathbf{SC}^3}_k)}{\partial (D^{\mathbf{SC}^3}_k)^2}=\frac{4N(\mathbf{v}_k)|\det(\mathbf{M}_k)|^{\frac{1}{n_k}}}{n_k(\log_2e)^2}\times\\
		&\quad\quad\frac{2^{\frac{2}{n_k}(D^{\mathbf{SC}^3}_k-\log_2|\det(\mathbf{A}_k)|)}\big[2^{\frac{2}{n_k}(D^{\mathbf{SC}^3}_k-\log_2|\det(\mathbf{A}_k)|)}+1\big]}{\big[2^{\frac{2}{n_k}(D^{\mathbf{SC}^3}_k-\log_2|\det(\mathbf{A}_k)|)}-1\big]^3},
	\end{aligned}
\end{equation}
which is greater than zero as long as the stable condition, $D^{\mathbf{SC}^3}_k > \log_2|\det(\mathbf{A}_k)|$, is satisfied. Therefore, the LQR-cost constraint is convex in the stable region. As for the CNER constraint \eqref{25c}, the constraint involving the minimum function is convex because it can be equivalently rewritten as a set of linear constraints
\begin{equation}
	D^{\mathbf{SC}^3}_k \leqslant \min\{\rho_{s_k}D^u_k, \ D^d_k\} \Rightarrow 
	\begin{cases}
		D^{\mathbf{SC}^3}_k \leqslant \rho_{s_k}D^u_k, \\
		D^{\mathbf{SC}^3}_k \leqslant D^d_k,
	\end{cases}
\end{equation}
which are all linear constraints. Then, for the UL\& DL cycle rate constraints \eqref{25e} and \eqref{25f}, the bandwidth and transmit power are expressed in the form of $g(x,y)=ax\log_2\left(1+b\frac{y}{x}\right)$, where $a$ and $b$ are constants. We calculate the Hessian matrix of this function, denoted as $\mathbf{H}_g(x, y)$,
\begin{equation}
	\mathbf{H}_g(x, y)
	=
	\begin{bmatrix}
		-\dfrac{a b^2 y^2}{x (b y + x)^2 \ln 2} & \dfrac{a b^2 y}{(b y + x)^2 \ln 2} \\[10pt]
		\dfrac{a b^2 y}{(b y + x)^2 \ln 2} & -\dfrac{a b^2 x}{(b y + x)^2 \ln 2}
	\end{bmatrix}.
\end{equation}
For a symmetric $2\times2$ matrix, it is negative semidefinite if and only if its diagonal entries are non-positive and its determinant is non-negative. Under the condition that $x,y,b>0$, we have
\begin{equation}
	\begin{aligned}
		& -\frac{a b^2 y^2}{x (b y + x)^2 \ln 2}<0, \\
		&-\dfrac{a b^2 x}{(b y + x)^2 \ln 2}<0,
	\end{aligned}
\end{equation}
and
\begin{equation}
	\begin{vmatrix}
		-\dfrac{a b^{2} y^{2}}{x (b y + x)^{2} \ln 2} \quad
		&
		\dfrac{a b^{2} y}{(b y + x)^{2} \ln 2}
		\\[15pt]
		\dfrac{a b^{2} y}{(b y + x)^{2} \ln 2} \quad
		&
		-\dfrac{a b^{2} x}{(b y + x)^{2} \ln 2}
	\end{vmatrix}=0.
\end{equation}
Therefore, the Hessian matrix $\mathbf{H}_g(x, y)$ is negative semidefinite, indicating that the function $g(x, y) = a x \log_2\left(1 + b \frac{y}{x} \right)$ is concave. Consequently, the UL\&DL cycle rate constraints in \eqref{25e} and \eqref{25f} adhere to the principles of convex optimization. As both the objective function and all constraints are convex, (P3) is a convex optimization problem.

\balance

\end{document}